\DeclareMathOperator*{\Index}{I}
\newcommand{\R}{\mathbb R}
\newcounter{theorem}
\theoremstyle{theorem}
\newtheorem{proposition}[theorem]{Proposition}
\begin{document}

\title{\uppercase{Coarse Quad Layouts through Robust Simplification of Cross Field Separatrix Partitions}}
\author{Ryan Viertel$^1$ \and Braxton Osting$^2$ \and 
	Matthew Staten$^1$}
\date{
  $^1$Sandia National Laboratories, Albuquerque, NM, U.S.A. rvierte@sandia.gov, mlstate@sandia.gov\\
$^2$University of Utah, Salt Lake City, UT, U.S.A. osting@math.utah.edu
}

% \abstract{*} and \keywords{*} must be before \maketitle.
\abstract{
  Streamline-based quad meshing algorithms use smooth cross fields to partition surfaces into quadrilateral regions by tracing cross field separatrices. In practice, re-entrant corners and misalignment of singularities lead to small regions and limit cycles, negating some of the benefits a quad layout can provide in quad meshing. We introduce three novel methods to improve on a pipeline for coarse quad partitioning. First, we formulate an efficient method to compute high-quality cross fields on curved surfaces by extending the diffusion generated method from Viertel and Osting (SISC, 2019). Next, we introduce a method for accurately computing the trajectory of streamlines through singular triangles that prevents tangential crossings. Finally, we introduce a robust method to produce coarse quad layouts by simplifying the partitions obtained via naive separatrix tracing. Our methods are tested on a database of 100 objects and the results are analyzed. The algorithm performs well both in terms of efficiency and visual results on the database when compared to state-of-the-art methods.
}

\keywords{cross fields, quad partitioning, quad meshing, surface decomposition}

\maketitle
\thispagestyle{empty}
\pagestyle{empty}

\section{Introduction} \label{sec:intro}

Block structured quad meshes are often desirable because of their numerical efficiency \cite{lindquist_comparison_1989}, low memory requirements \cite{sandia_trilinos_2017}, and high mesh quality \cite{kowalski_pde_2013}. Having such a block structure on a surface is also advantageous for tasks such as spline fitting and isogeometric analysis \cite{campen_partitioning_2017}. In the past, such meshes have been designed by hand or in an interactive environment \cite{dlr_megacads_2001}. More recently, researchers have made progress towards fully automating this process \cite{campen_partitioning_2017}. The problem of coarse quad layout generation can be summarized as determining the placement of irregular nodes of the layout, and determining the connectivity of those nodes in such a way that the resulting layout is topologically valid, coarse, and such that a mapping of the region to a quadrilateral in the plane results in low distortion.

  Various approaches have been taken to generate coarse quad partitions of surfaces, each of which address these problems of irregular node placement and determining connectivity in different ways. These methods include medial axis subdivision \cite{tam_2d_1991,gould_automated_2012}, computing the Morse-Smale complex \cite{dong_spectral_2006}, surface foliations \cite{lei_generalized_2017}, dual loops \cite{campen_dual_2012}, simplifying an existing quad mesh \cite{tarini_simple_2011,bommes_global_2011}, and more recently, cross field based approaches \cite{kowalski_pde_2013, bommes_integer-grid_2013, razafindrazaka_perfect_2015, fogg_automatic_2015, campen_quantized_2015,zhang_automatic_2016,pietroni_tracing_2016}. While in this discussion we focus on cross field based streamline tracing approaches, Campen [4] provides an excellent literature review on quad patching algorithms.

In cross field based streamline tracing approaches, irregular node placement is determined by computing singular points of the cross field, where simple singularities of positive or negative index correspond to irregular nodes of valence 3 or 5 respectively. The connectivity of the layout is then determined by either tracing out raw separatrices of the cross field, or by using the isolines of an underlying parameterization. A common problem with separatrix tracing approaches is that on a discrete geometry, singularities are never perfectly aligned. In practice, this frequently causes limit cycles and very thin regions to occur within the quad layout. This is problematic for meshing because very small mesh elements are required. Further, the \emph{base complex} of the mesh is often far more complicated than necessary, mitigating the benefits of a multi-block decomposition. Another problem is that despite the fact that in the continuum, streamlines can only cross each other orthogonally \cite{viertel_approach_2019}, numerical inaccuracies often lead to tangential crossings of streamlines. This is especially true near singularities, where large changes in direction occur over arbitrarily small lengths.

We make three primary contributions:
\begin{enumerate}
  \item We extend the diffusion generated method for cross field design in \cite{viertel_approach_2019} to curved surfaces. This method which has previously only been described in 2D tends to have good singularity placement in locations where singularities occur because of boundary curvature rather than Gaussian curvature of the surface, a common scenario for CAD surfaces.
  \item We prove that near singularities, streamlines of a cross field are hyperbolic under a conformal map. This results in a simple method to compute streamlines in the neighborhood of a singularity that prevents tangential crossings.
  \item We describe a novel algorithm for simplifying a quad partition with T-junctions such that the number of partition components strictly decreases and the number of T-junctions decreases monotonically.
\end{enumerate}
\begin{algorithm}[t]
\caption{Partitioning a surface $M$ into a coarse quad layout.} \label{alg:ps-overview}

\vspace{.2cm}

\begin{algorithmic}
\STATE{\bfseries Input:} A triangle mesh $T$ representing a surface $M$ in $\R^3$.

\vspace{.2cm}

\STATE{\bfseries Output:} A quad layout with T-junctions $Q$ partitioning $M$ into four-sided regions.

\vspace{.2cm}

\STATE 1. Compute a smooth cross field on $\mathrm{T}$.
\STATE 2. Construct a quad layout with T-junctions by tracing out separatrices of the cross field.
\STATE 3. Simplify the quad layout with T-junctions.

\end{algorithmic}
\end{algorithm}
To demonstrate these contributions, we include them in the pipeline described in algorithm \ref{alg:ps-overview}, which takes as input a triangle mesh, and outputs a coarse quad layout, possibly with T-junctions.

\subsection{Related Work}

Several researchers have taken an approach to generating quad layouts that is similar to the pipeline in algorithm \ref{alg:ps-overview}. Kowalski et al.\ \cite{kowalski_pde_2013} design a cross field by solving a PDE with a constraint applied via Lagrange multipliers. They numerically integrate streamlines from each interior singularity and boundary corner, and snap streamlines to singularities when they pass within a certain tolerance of the singularity in order to obtain a coarser quad layout. While streamline snapping works well on some examples, it is not robust in general because it can introduce tangential crossings when more than one streamline passes nearby a singularity. Fogg et al.\ \cite{fogg_automatic_2015} take a similar approach, but initialize the cross field by an advancing front method and then smooth it with the energy functional introduced by Hertzmann and Zorin \cite{hertzmann_illustrating_2000}. Rather than snapping separatrices to singularities, they allow them to pass by singularities, resulting in thin regions throughout the partition. Ray and Sokolov \cite{ray_robust_2014} and Myles et al.\ \cite{myles_robust_2014} both implement robust streamline tracing algorithms based on edge maps \cite{jadhav_consistent_2012}, and then trace out streamlines in parallel until their first crossing with another streamline, forming a motorcycle graph \cite{eppstein_motorcycle_2008}. This approach yields coarse quad patches because of the large number of T-junctions that appear in the decomposition. Because of the T-junctions, it is a non-trivial matter to assign globally consistent parametric lengths to each edge. Myles et al.\ \cite{myles_robust_2014} are able to achieve this via a heuristic method, which includes a collapse operation, similar to the one we describe in \cref{sec:partition-simplification}, used to remove edges with zero parametric length, and 2-6 cone insertion to remove zero edges which cannot be collapsed. They subsequently generate a globally consistent seamless parameterization on each quad region. This method differs from our goal in that the parameterization is not \emph{quantized} (cf. \cite{campen_quantized_2015}) and so does not correspond to a quad mesh let alone a coarse quad decomposition. Campen et al.\ \cite{campen_quantized_2015} compute quantized parameterizations on surfaces by solving a combinatorial optimization problem similar to \cite{bommes_integer-grid_2013} but leverage the structure of a motorcycle graph to determine a set of linear equality constraints which are applied to the final optimization problem, guaranteeing a valid solution which also outperforms previous parameterization methods, especially in cases with a large number of singularities or large target edge lengths. This enables them finally to extract a coarse quad mesh. Razafindrazaka et al.\ \cite{razafindrazaka_perfect_2015} trace out separatrices of a seamless parameterization to generate a graph of possible matchings between separatrices and singularities. They then formulate the problem of connecting singularities together as a minimum weight perfect matching problem. In a second paper \cite{razafindrazaka_optimal_2017} they extend their perfect matching method to work on an input quad mesh rather than a seamless parameterization. Zhang et al.\ \cite{zhang_automatic_2016} employ a similar strategy to \cite{razafindrazaka_perfect_2015}, from an input seamless parameterization they identify candidate separatrices as monotone isolines of the parameterization in areas they call \emph{safety turning areas} which are rectangular parameterizations between opposite singularities. From these candidate separatrices they choose an optimal set via a binary optimization problem. Pietroni et al.\ \cite{pietroni_tracing_2016} trace out candidate curves connecting either two singularities or a singularity to a boundary which they call \emph{field-coherent} streamlines, which can deviate from the cross field but never switch to a different direction of the field. They then select a valid subset of these candidate streamlines to form a quad layout with T-junctions by solving a binary optimization problem.

The pipeline we implement is most similar to \cite{razafindrazaka_perfect_2015,zhang_automatic_2016} and \cite{pietroni_tracing_2016} because it attempts to generate a quad layout by directly manipulating the streamlines of an underlying cross field. It also shares similarities with the methods of Tarini et al.\ \cite{tarini_simple_2011} and Bommes et al.\ \cite{bommes_global_2011}, which attempt to simplify the base complex of an unstructured quad mesh via greedy application of grid preserving operators, the difference being that our simplification method is applied directly to the separatrices of a cross field rather than an input quad mesh.

\section{Cross Field Design} \label{sec:cross-fields}

Many different methods exist to design a smooth cross field on a curved surface \cite{hertzmann_illustrating_2000,palacios_rotational_2007,ray_n-symmetry_2008,ray_geometry-aware_2009,bommes_mixed-integer_2009,crane_trivial_2010,knoppel_globally_2013,panozzo_frame_2014,jakob_instant_2015,vaxman_directional_2016,huang_extrinsically_2016,beaufort_computing_2017}, any of which might be used in algorithm \ref{alg:ps-overview}. In the interest of brevity, we will not offer a full review of the literature here but instead refer the reader to \cite{vaxman_directional_2016}. The diffusion generated method introduced in Viertel and Osting \cite{viertel_approach_2019} is appealing in the context of meshing CAD surfaces because it is comparable in speed to the fastest methods such as those described in Kn\"oppel et al.\ \cite{knoppel_globally_2013} and Jakob et al.\ \cite{jakob_instant_2015}, but also has an advantage over these methods on surfaces with boundary, which are common in CAD and meshing for FEM, because it tends to place singularities symmetrically even when the surface is flat (see \Cref{fig:disks}).

\begin{figure}[h]
  \includegraphics[width=\linewidth]{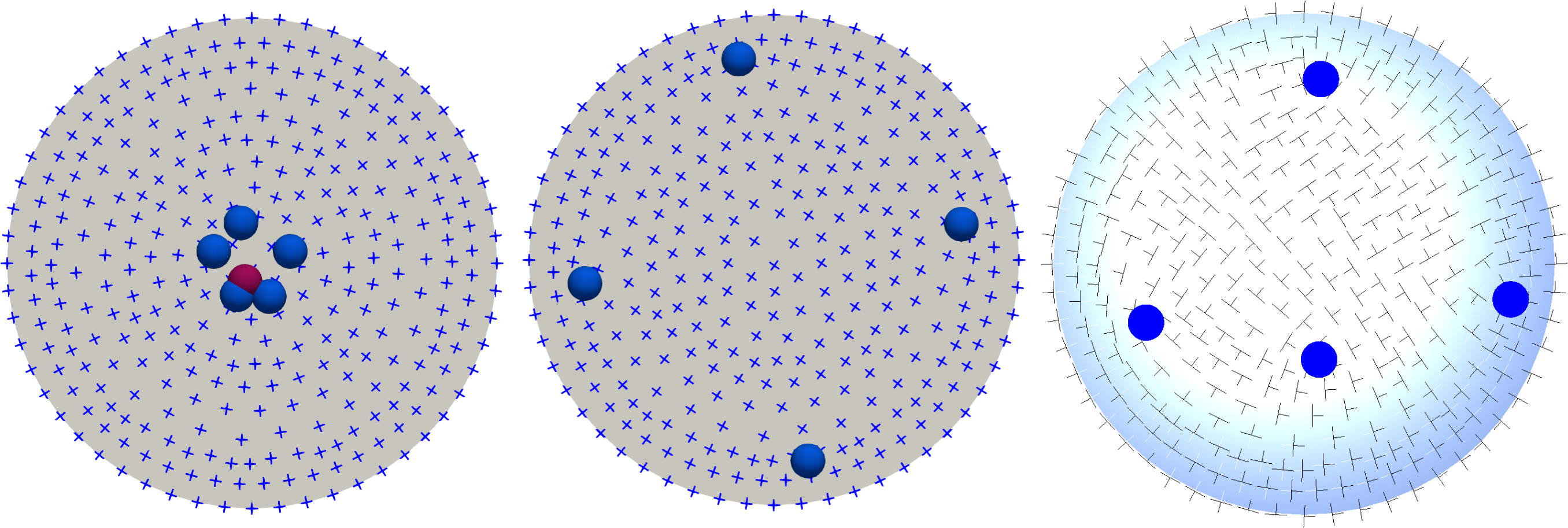}
  \caption{Left to right, a comparison of singularity placement from cross fields described in Kn\"oppel et al.\ \cite{knoppel_globally_2013}, Viertel and Osting \cite{viertel_approach_2019}, and Jakob et al.\ \cite{jakob_instant_2015}. Singularities with positive and negative index are shown in blue and red respectively.}
  \label{fig:disks}
\end{figure}

In this section, we extend the diffusion generated method for cross field design to curved surfaces. We use this method of cross field design as the first step in our pipeline (\cref{alg:ps-overview}). In \cref{sec:surfaces}, we first review some tools from the literature that have been used previously to develop cross field design methods on surfaces. In \cref{sec:discrete-MBO}, we describe the implementation details of our method.

\subsection{Cross Fields on Surfaces} \label{sec:surfaces}
The main difficulty in extending flat 2D methods to curved surfaces is the lack of a global coordinate system. In this section, we use concepts from differential geometry to formulate the cross field design problem on 2-manifolds. We consider a smooth, orientable 2-manifold $M$ embedded in $\R^3$ and endowed with the Riemannian metric induced by the the Euclidean metric on $\R^3$. Let $\mathbb{T} = \{ z \in \mathbb{C}\colon |z| = 1 \}$ be the circle group with group operation given by complex multiplication and let $\rho(4)$ be the set of the $4$th roots of unity. A \emph{cross} is an element of $C = \mathbb{T}/\rho(4)$. There is a canonical group isomorphism $R \colon C \rightarrow \mathbb{T}$ called the \emph{representation map} given by $R([c]) = c^4,$ where $c$ is any representative member of the equivalence class $[c] \in C$. We will refer to $u = c^4$ as the \emph{representation vector} for $[c]$. The \emph{inverse representation map} $R^{-1} \colon \mathbb{T} \rightarrow C$ assigns $u \in \mathbb{T}$ to $R^{-1}(u) = \left[\sqrt[4]{u}\right]$, the equivalence class of the $4$th roots of $u$.

  Let $T_p$ be the tangent space of $M$ at a point $p$. The disjoint union of all tangent spaces on $M$ is called the \emph{tangent bundle} and is denoted $TM$. For each tangent space $T_p$ we select a coordinate basis, $\{\frac{\partial }{\partial x^1}|_p,\frac{\partial }{\partial x^2}|_p\}$. We also associate with each point $p$ of $M$ a space homeomorphic to $C$, which we call the \emph{cross space} at $p$. We denote this space by $C_p$. The disjoint union of all cross spaces of $M$ defines a fiber bundle that we refer to as the \emph{cross bundle}. A section of the cross bundle, or a choice of one cross per cross space, is called a \emph{cross field} on $M$. We make the natural identification between $T_p$ and the complex plane by the map $(a,b) \mapsto a + ib$. In this way, we can identify a cross, $[c_p]$, in $C_p$ as an unordered set of four orthogonal unit vectors in $T_p$, which we call the cross \emph{component vectors}. This also allows us to define a representation map $R_p$ at each point with respect to the local coordinate basis, and a representation vector $u_p = R_p([c_p])$, which we identify with the corresponding unit tangent vector in $T_p$. For simplicity, we will use complex notation for equations throughout the paper.

In 2D cross field design, the goal of designing a smooth cross field is often formulated as designing a harmonic representation vector field $u$ \cite{viertel_approach_2019,vaxman_directional_2016}. That is, to minimize the Dirichlet energy
\begin{equation} \label{eq:energy}
  E[u] \coloneqq \frac{1}{2}\int_M{|\nabla{u}|^2dA}
\end{equation}

\noindent with the constraint that $|u| = 1$ at each point of the domain. We note that in general, this problem is ill-posed; however, generalized solutions exist if a finite number of singular points are removed from $M$ \cite{viertel_approach_2019,bethuel_ginzburg-landau_1994}.

This strategy can be extended to surfaces by replacing the gradient operator $\nabla$ with the appropriate \emph{connection} on the tangent bundle. The Levi-Civita connection provides a way to compare vectors on the tangent bundle that preserves the notion of inner product between tangent spaces. That is, if $P_{pq}$ is the parallel transport function for the Levi-Civita connection between $T_p$ and $T_q$, and if ${v}_1, {v}_2 \in T_p$, then
\[ \langle{v}_1,{v}_2 \rangle_{T_P} = \langle P_{pq}({v}_1),P_{pq}({v}_2) \rangle_{T_q}.\]

Visually, the effect of using this connection in equation \ref{eq:energy} is that a minimizing vector field appears smooth (see \Cref{fig:LC-smooth}).
\begin{figure*}[h]
  \includegraphics[width=\textwidth]{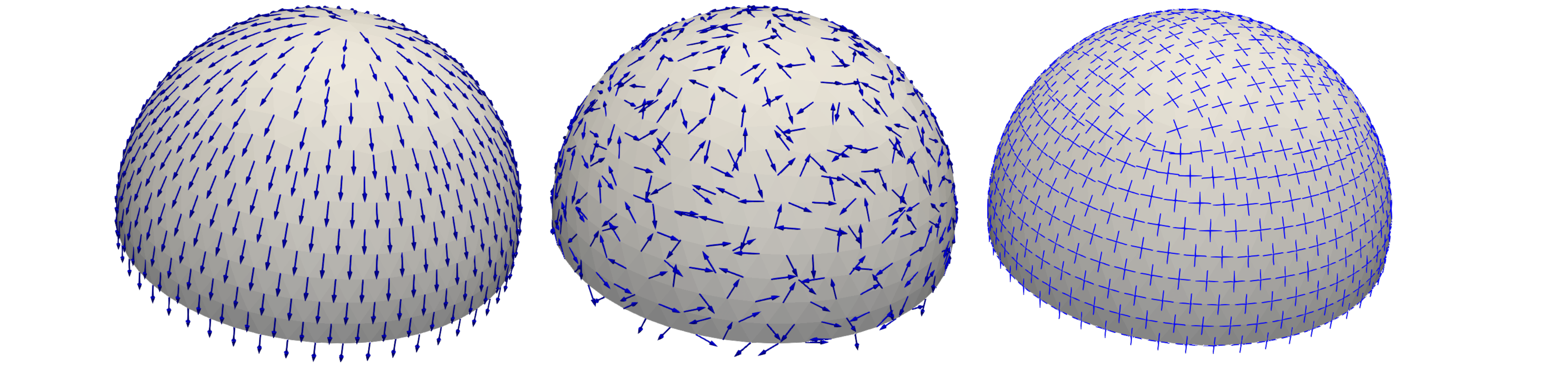}
  \caption{A comparison of fields and connections. {\bf Left:} A smooth vector field as measured by the Levi-Civita connection. {\bf Center:} A smooth vector field as measured by the connection $\nabla^Q$. {\bf Right:} The cross field corresponding to the vector field in the center.}
  \label{fig:LC-smooth}
\end{figure*}

In order to extend the strategy from 2D, using equation \ref{eq:energy} to design a smooth cross field, we seek a connection, $\nabla^Q$, on the tangent bundle with corresponding parallel transport function, $Q$, appropriate for transporting \emph{representation vectors}. We choose $Q$ in such a way that the component vectors of the corresponding crosses are transported by $P$, the parallel transport function corresponding to the Levi-Civita connection. Let $\gamma_{pq}(t) \colon [0,1] \rightarrow M$ be a Levi-Civita geodesic connecting points $p$ and $q$ such that $\gamma_{pq}(t_p) = p$, $\gamma_{pq}(t_q) = q$, and $t_p, t_q \in (0,1)$. Let $\phi_p$ be the signed angle from the velocity vector $\gamma'_{pq}(t_p)$ to $\frac{\partial }{\partial x^1}|_p$ and let $\phi_q$ be the signed angle from $\gamma'_{pq}(t_q)$ to $\frac{\partial }{\partial x^1}|_q$. Then if $\phi_{pq} = \phi_p - \phi_q$, $P_{pq}({v}) = e^{i\phi_{pq}}{v}$ gives the Levi-Civita parallel transport function, $P_{pq}$, in coordinates with respect to the basis $\{\frac{\partial }{\partial x^1}|_p,\frac{\partial }{\partial x^2}|_p\}$.

 Then, $Q_{pq}$, the parallel transport function for \emph{representation vectors} between $p$ and $q$, must satisfy
\begin{equation} \label{eq:parallel-transport}
  Q_{pq}(R_p([c_p])) = R_q([P_{pq}(c_p)]).
\end{equation}

We can write $c_p = e^{i(\theta_p +2k\pi/4)}$ for some $k \in \{0,1,2,3\}$ where $\theta_p$ is the signed angle from $\frac{\partial }{\partial x^1}|_p$ to one of the component vectors of $[c_p]$. We can now write equation \ref{eq:parallel-transport} as
\[ Q_{pq}(e^{4i\theta_p}) = e^{4i(\theta_p + \phi_{pq})}.\]

\noindent It follows that $Q_{pq}({v}) = e^{4i\phi_{pq}}{v}$. In addition, we can define parallel transport on the cross bundle from $C_p$ to $C_q$ by $R^{-1}_q \circ Q_{pq} \circ R_p$. We can now use $\nabla^Q$ in equation \ref{eq:energy} to define a smooth cross field. In the following section, we define the discrete Laplace equation corresponding to this energy and describe the diffusion generated method for cross field design in detail.

\subsection{Discrete Formulation} \label{sec:discrete-MBO}

On each node $n_i$ of the input triangle mesh, $\mathrm{T}$, a normal vector, $\vec{n}_i$, is computed as an average of the vectors normal to each adjacent face weighted by the tip angle at the node. This normal vector in turn defines the tangent space, $T_i$, at node $n_i$. We then arbitrarily select a vector in $T_i$, which we assign to be $\frac{\partial }{\partial x^1}|_i$. Then $\frac{\partial }{\partial x^2}|_i$ is the vector such that $\frac{\partial }{\partial x^1}|_i \times \frac{\partial }{\partial x^2}|_i = \vec{n}_i$.

Let $e_{ij}$ be the edge connecting nodes $n_i$ and $n_j$. We compute the value $\phi_i$ as the signed angle between the projection of $e_{ij}$ onto $T_{i}$ and $\frac{\partial }{\partial x^1}|_{i}$. We then store the value $\phi_{ij} = \phi_j - \phi_i$, on the edge $e_{ij}$.

We now define the discrete parallel transport function $Q_{ij} \colon T_i \rightarrow T_j$ by $Q_{ij}(u) = e^{4i\phi_{ij}}u$, which parallel transports representation vectors across $e_{ij}$. This allows us to compare two representation vectors $u \in T_i$ and $v \in T_j$ by
\[|v - Q_{ij}(u)|^2. \]

In order to state a well-defined problem, we apply a Dirichlet condition. In the case of closed manifolds, we arbitrarily fix the orientation of a single cross. In the case of a bounded manifold, we apply a Dirichlet boundary condition by fixing $u_i$ on each boundary node. We make the convention that the cross field index of a boundary node (see \cite{viertel_approach_2019}) is assigned according to \Cref{tab:boundary-index}.
\begin{table}[t]
  \caption{Boundary index assignment} \label{tab:boundary-index}
  \begin{tabular}{p{2cm}p{2.4cm}p{2cm}p{4.9cm}}
    \hline\noalign{\smallskip}
    Interior Angle & Index \\
    \noalign{\smallskip}\hline\noalign{\smallskip}
    $(0, \frac{3\pi}{4})$ & $\frac{1}{4}$ \\
    $[\frac{3\pi}{4}, \frac{5\pi}{4}]$ & 0 \\
    $(\frac{5\pi}{4}, \frac{7\pi}{4}]$ & $-\frac{1}{4}$ \\
    $(\frac{7\pi}{4}, 2\pi)$ & $-\frac{1}{2}$ \\
    \noalign{\smallskip}\hline\noalign{\smallskip}
  \end{tabular}
\end{table}

For a boundary node $n_i$ of index 0 or $-\frac{1}{2}$, we compute the outward pointing unit normal vector of each boundary edge adjacent to $n_i$ lying in the plane of the adjacent boundary face. We then project these vectors into the tangent plane at $n_i$, and bisect them with a unit vector, $d_i$, which averages the directions of the facet normals. Since we want to align the cross to this vector, we set $u_i = d_i^4$. For nodes of index $\pm$ $\frac{1}{4}$, $d_i$ is first rotated $\pi/4$ radians about $\mathbf{n}_i$ in the positive direction before computing $u_i$.

\subsubsection{The Diffusion Generated Method for Cross Field Design}

Intuitively, we would like to solve until stationarity the time dependent problem
\begin{align} \label{eq:time-dependent}
  u_t(t,x) &= \Delta u(t,x) && x \in M\nonumber\\
  u(t,x) &= g(x) && x \in \partial M\\
  u(0,x) &= u^0(x) && x \in M\nonumber
\end{align}

\noindent with the constraint that $|u(x)| = 1$ pointwise, rather than directly solving the stationary problem with the same constraint. This key difference allows one to avoid the necessity of using a non-linear solver because the pointwise constraint can be enforced simply by normalizing the solution in between discrete time steps (see \cite{ruuth_diffusion-generated_2001,laux_analysis_2019}). We proceed by defining a discrete Laplacian operator, $\Delta_Q$ corresponding to the connection $\nabla^Q$.
\begin{equation} \label{eq:discrete-laplacian}
  \Delta_Q(u)|_i = \frac{1}{|\mathcal{N}(n_i)|} \sum_{n_j \in \mathcal{N}(n_i)}{\left(u_j - Q_{ij}(u_i)\right)}
\end{equation}

\noindent where $\mathcal{N}(n_i)$ is the one-ring neighborhood of $n_i$ and $|\mathcal{N}(n_i)|$ is the area of that neighborhood. This discrete Laplacian operator allows us to assemble a discrete diffusion equation using a backward Euler time discretization;
\begin{equation}
  (I - \tau A)u = u_0 + \tau b
\end{equation}

\noindent where $A$ is the matrix form of $\Delta_Q$. We iteratively solve this equation for a small time step $\tau$ and pointwise renormalize the solution between each iteration. The algorithm is described in detail in algorithm \ref{alg:MBO}.

\begin{algorithm}[t]
\caption{A diffusion generated method for designing smooth cross fields} \label{alg:MBO}
\vspace{.2cm}

  \begin{algorithmic}
    \STATE Let $u^0$ be the solution to $Au = b$.
    \STATE Fix $\tau$, $\delta$, and set $k = 0$.
    \WHILE {$\| u^k - u^{k-1} \| > \delta $,}
    \STATE Solve the discrete diffusion equation,
    \begin{equation}
      (I - \tau A)u^{k+1} = u^k + \tau b
    \end{equation}
    \FOR {$j \in [0, n]$}
    \STATE Set $u^{k+1}_j = \frac{u^{k+1}_j}{|u^{k+1}_j|}$\\
    \ENDFOR
    \STATE $k++$
    \ENDWHILE
    \vspace{.2cm}
  \end{algorithmic}
\end{algorithm}

\section{Generation of a Quad Layout with T-junctions} \label{sec:quad-layout}

After designing a cross field on a triangle mesh, the next step in \cref{alg:ps-overview} is to construct a quad layout with T-junctions. This is accomplished in two steps:
\begin{enumerate}
  \item Determine singularity locations and ports.
  \item Trace out separatrices of the cross field.
\end{enumerate}

In \cref{sec:singularities-and-ports}, we detail the first step of this process. In \cref{sec:separatrices}, we describe our approach to streamline tracing, including our novel method for computing the trajectory of a streamline in the neighborhood of a singularity. In \cref{sec:tangential-crossings}, we discuss stopping conditions for the streamline tracing algorithm as well as a heuristic method for handling tangential streamline crossings that occur commonly when tracing streamlines via numerical integration.

\subsection{Singularity and Port Detection} \label{sec:singularities-and-ports}

Singularity and port detection in a cross field defined per node is well documented in the literature \cite{kowalski_pde_2013,fogg_automatic_2015,viertel_approach_2019,knoppel_globally_2013,jakob_instant_2015}. We use methods that have been developed previously, but include a description here for completeness.

\subsubsection{Matchings}
  Across each edge, we assume that crosses make the smallest rotation possible. This is called the \emph{principle matching} of the crosses. If $u_i = e^{4i\theta_i}$, then the change in cross orientation between two nodes $n_i$ and $n_j$, denoted $\Delta_{ij}$, is the number between $-\frac{\pi}{4}$ and $\frac{\pi}{4}$ given by
\[  \Delta_{ij} = (\theta_j - (\phi_{ij} + \theta_i)) (\bmod\:\pi/2) - \pi/4. \]

\subsubsection{Singularity Detection}

The \emph{index} of a triangle, $t_{ijk}$, with nodes $n_i$, $n_j$, and $n_k$ is the number given by
\[ \Index{(t_{ijk})} = \frac{\Delta_{ij} + \Delta_{jk} + \Delta_{ki} - \phi_{ij} - \phi_{jk} - \phi_{ki}}{2\pi}. \]

Practically, this is the number of turns that a cross makes while circulating the triangle, which is always an integer multiple of $\frac{1}{4}$. A triangle is \emph{singular} if its index is non-zero. Summing the changes in cross orientation along each edge, we compute the total circulation of the cross around the triangle. On a flat surface, the total circulation must be a multiple of $\pi/2$; however, on curved surfaces, this is not the case. In general, a vector in the tangent bundle of $M$ parallel transported along a closed curve does not always return to the same orientation after circulating the curve. To mitigate the effects of \emph{holonomy} while transporting a cross around the triangle, we subtract from the total circulation the angle defect that occurs when parallel transporting a vector around the triangle via our discrete connection. This angle defect is given by $\phi_{ij} + \phi_{jk} + \phi_{ki}$. Subtracting the angle defect from the total circulation leaves us with a number that is a multiple of $\pi/2$, from which we compute the index.

After we determine singular triangles, we approximate the location of the singularity within the triangle by taking the barycenter of the triangle. We choose an arbitrary node, $n$, on the triangle and rotate the cross component vectors at that node into the plane of the face. We use the ray starting at the barycenter and passing through $n$ as a reference axis, and compute the angle $\alpha$ that the nearest cross component vector makes with the reference axis. We then compute the angles where streamlines exit the singularity (ports) as $\alpha + \frac{2\pi k}{4-d}$ where $d/4$ is the index of the singularity (see \cite{viertel_approach_2019}).

\subsection{Separatrix Tracing} \label{sec:separatrices}

After designing a cross field on $T$, we build an initial quad partition by tracing separatrices of the cross field. Below, we describe the methods used for tracing separatrices in non-singular and singular triangles.

\subsubsection{Non-Singular Triangles} \label{sec:non-singular-triangles}

In non-singular triangles $t_{ijk}$, we project the crosses on each corner of $t_{ijk}$ into the plane of the triangle. We define a reference coordinate axis in that plane, and compute a representation vector for each cross with respect to this reference coordinate axis. We interpolate the argument of the representation vector linearly over the triangle. This is possible because in a non-singular triangle, the argument is continuous as it circulates the boundary. Using this interpolation, we trace out the streamlines using Heun's method \cite{kowalski_pde_2013}.

\subsubsection{Singular Triangles} \label{sec:singular-triangles}

Previously, no streamline tracing methods have been suitable for accurately tracing streamlines in the neighborhood of a singular triangle. Numerical integration methods such as Heun's method and other Runge-Kutta methods \cite{kowalski_pde_2013,fogg_automatic_2015,viertel_approach_2019,zhang_vector_2006} are inaccurate in the neighborhood of a singularity since cross directions can change arbitrarily fast. They frequently compute discretizations of the streamline that ``cut the corner'' rather than traversing around the singularity as they should (see \Cref{fig:cut-the-corner}). Streamline tracing methods based on edge maps \cite{ray_robust_2014,myles_robust_2014} are guaranteed to never result in tangential crossings, but since paths are discretized by straight lines through each triangle, they are limited in their ability to resolve the path of a trajectory around a streamline by the number of triangles that meet at a singular point.
\begin{figure}[h]
  \begin{center}
    \includegraphics[width=0.15\textwidth]{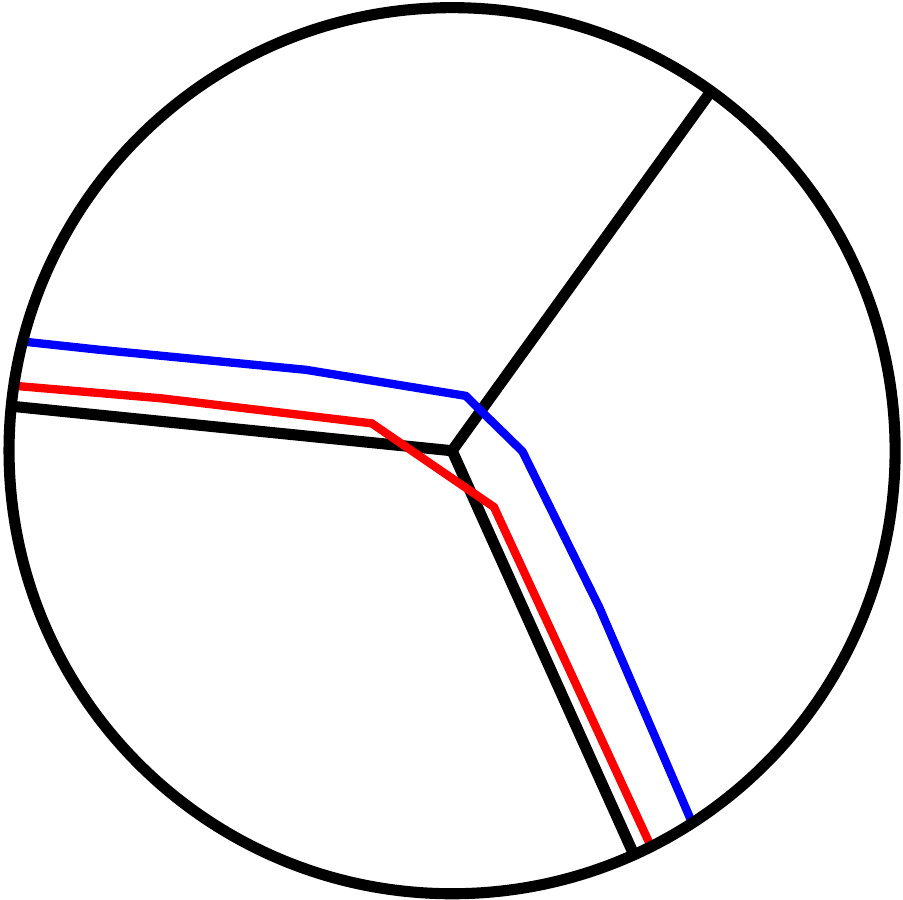}
  \end{center}
  \caption{An illustration of a numerically traced streamline that ``cuts the corner'' and does not traverse around the singularity.}
  \label{fig:cut-the-corner}
\end{figure}

Here we develop a new method to accurately trace streamlines within the neighborhood of a singularity to any predefined resolution. This method guarantees that no tangential intersections of streamlines will occur within the neighborhood. We first prove that in $\mathbb{R}^2$, trajectories of streamlines through the neighborhood of a singularity are hyperbolas under a conformal transformation.

Let $f$ be a canonical harmonic cross field (see \cite{viertel_approach_2019}) on a domain $D \subset \R^2$. Let $a \in \R^2$ be the location of a singularity of index $\frac{d}{4}$ where $d$ is an integer $\leq 1$. Consider the open ball $B(a,r_0)$ of radius $r_0 > 0$ centered at $a$. We seek an approximation for the trajectory of an arbitrary streamline passing through a point $q \neq a$ in $B(a,r_0)$.

The cross field, $f(z)$, partitions $B(a,r_0)$ into $4 - d$ evenly angled sectors bounded by separatrices of the cross field \cite{viertel_approach_2019}. In each sector, the cross field defines a local $(u,v)$ parameterization. Let $S$ be the open sector containing $q$, and let $s_0$ and $s_1$ be the separatrices bounding $S$ ordered counterclockwise. Because the cross field defines a local $(u,v)$ parameterization on $S$, there are two streamlines passing through $q$, one crossing $s_0$ orthogonally, the other crossing $s_1$ orthogonally. Without loss of generality, we consider $\gamma$, the streamline crossing $s_1$ orthogonally. This streamline crosses $s_1$ into $S'$, the open sector adjacent to $S$ that is also bounded by $s_1$, and then exits $B(a,r_0)$ (see \Cref{fig:hyperbolic}).
\begin{figure}[h]
  \includegraphics[width=\linewidth]{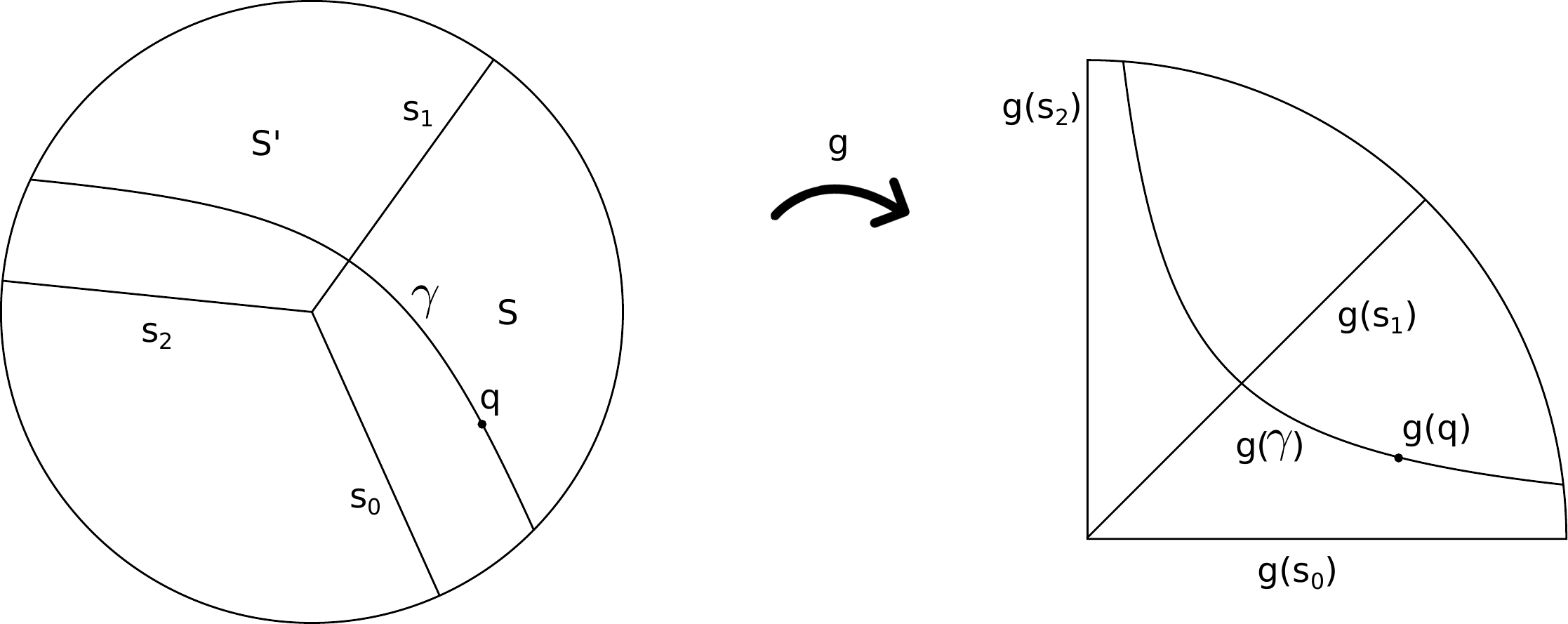}
  \caption{Streamlines in a neighborhood of a singularity become hyperbolas under a conformal map $g$.}
  \label{fig:hyperbolic}
\end{figure}

By \cite{viertel_approach_2019}, the cross field in $B(a,r_0)$ can be written as \[f(z) = e^{i(\frac{d\theta}{4} + \frac{2k\pi}{4})} + o(r)\] where $d/4$ is the index of the singularity, $z=re^{i\theta}$, $k \in \{0,1,2,3\}$, and $\theta = 0$ corresponds to $s_0$. For $r < r_0$ where $r_0$ is sufficiently small, we make the approximation \[f(z) = e^{i(\frac{d\theta}{4} + \frac{2k\pi}{4})}.\] Streamlines of the cross field are given by \[ z' = f(z). \] Since we are looking for the streamline crossing through $s_1$, we consider $k = 0$. Thus, we are looking for the set $C = \{z(t) \in B(a,r_0)\, |\, t \in (t_a,t_b)\}$ where $z(t)$ on $(t_a,t_b)$ is the solution to the problem
\begin{align}\label{eq:ODE}
  z' &= e^{i\frac{d\theta}{4}}\\
  z(0) &= (r_q,\theta_q)
\end{align}

\noindent in $D = \{z = re^{i\theta}\, |\, r \in (0,r_0),\, \theta \in (0,\frac{4\pi}{4-d})\}$.

\begin{proposition} \label{prop:hyperbolic-trajectory}
  $C = \{(x + iy)^{-(4-d)/8}\, |\, xy = A,\, x \in I_x\}$ for some constant $A$ on some interval $I_x$.
\end{proposition}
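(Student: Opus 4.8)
The plan is to discard the time parametrization, solve for the trajectory as a planar curve in closed form, and then recognize that curve as the preimage of a rectangular hyperbola under an explicit conformal power map.

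First I would pass to polar coordinates $z = re^{i\theta}$. Since $z' = (\dot r + i r \dot\theta)e^{i\theta}$, the ODE \eqref{eq:ODE} becomes $\dot r + i r\dot\theta = e^{i(d\theta/4 - \theta)} = e^{-i\mu\theta}$ with $\mu \coloneqq \frac{4-d}{4}$, i.e.\ the real system $\dot r = \cos(\mu\theta)$ and $r\dot\theta = -\sin(\mu\theta)$. On the domain $D$ we have $\theta \in (0, \frac{4\pi}{4-d})$, hence $\mu\theta \in (0,\pi)$ and $\sin(\mu\theta) > 0$, so $\dot\theta$ never vanishes; this justifies using $\theta$ as the curve parameter and collapses the system to the single separable equation $\frac{dr}{d\theta} = -r\cot(\mu\theta)$.

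Integrating $\frac{dr}{r} = -\cot(\mu\theta)\,d\theta$ gives $r^\mu \sin(\mu\theta) = A$ for a constant $A$ determined by $z(0) = (r_q,\theta_q)$; equivalently the trajectory is the level set $\operatorname{Im}(z^\mu) = A$, since $\operatorname{Im}(z^\mu) = r^\mu\sin(\mu\theta)$. The decisive step is then to introduce the conformal map $g(z) = z^{\mu/2} = z^{(4-d)/8}$. Writing $w = x + iy = g(z)$ we have $w^2 = z^\mu$ and $\operatorname{Im}(w^2) = 2\operatorname{Re}(w)\operatorname{Im}(w) = 2xy$, so the trajectory is carried exactly onto the rectangular hyperbola $xy = A/2$. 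Inverting $g$ on the sector and renaming the constant writes $C$ in the asserted form $\{(x+iy)^{8/(4-d)} : xy = A\}$, with $x$ confined to the interval $I_x$ cut out by the image sector.

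I expect the main obstacle to be the branch and domain bookkeeping for the fractional power maps rather than the integration itself. I must exhibit a single-valued holomorphic branch of $g(z) = z^{(4-d)/8}$ on $D$ and check that it is genuinely conformal there, the only critical/branch point being at the singularity $a$ itself, which $D$ excludes since $r \in (0,r_0)$. A convenient way to organize this is to note that $g$ scales the opening angle $\frac{4\pi}{4-d}$ of $D$ by the factor $\frac{4-d}{8}$ down to $\frac{\pi}{2}$, so $g(D)$ is a quarter-disk in the first quadrant; the level set $xy = A/2$ with $A > 0$ meets this quarter-disk in a single arc, which both confirms that $C$ is one connected branch of the hyperbola and pins down the endpoints of $I_x$. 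The remaining care is to confirm, using $\dot\theta \neq 0$ together with the orthogonality of $\gamma$ at $s_1$ (where $\mu\theta = \pi/2$), that $C$ traverses this arc monotonically and does not double back, so that the set description is exact.
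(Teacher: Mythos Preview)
Your argument is correct and reaches the same conclusion as the paper, but by a genuinely different route. You work directly in polar coordinates: from $z' = e^{id\theta/4}$ you extract the separable scalar ODE $dr/d\theta = -r\cot(\mu\theta)$ with $\mu = (4-d)/4$, integrate it to the closed-form first integral $r^{\mu}\sin(\mu\theta) = \operatorname{Im}(z^{\mu}) = \text{const}$, and only then introduce the conformal map $g(z) = z^{(4-d)/8}$ to recognise that first integral as $2xy$ in the image quadrant. The paper instead applies $g$ at the outset: it pushes the streamline forward through $g$, computes $\arg w'(t) = -\varphi$ from the chain rule $w' = g'(z)z'$, and lands on $dy/dx = -y/x$ in the image without ever solving the polar ODE. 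Your route is more elementary---pure separation of variables---and has the bonus of producing the explicit conserved quantity $\operatorname{Im}(z^{\mu})$ in the original domain, which makes the non-crossing and monotonicity claims you raise easy to read off. The paper's route is conceptually slicker, since the conformal map does the straightening and no integration is needed. Both are short and hinge on the same identification of the power map $g$. (Your exponent $8/(4-d)$ for $g^{-1}$ is the correct one; the $-(4-d)/8$ in the stated formula appears to be a typo in the paper, carried into its own proof as well.)
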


\begin{proof}
  % \smartqed
  Consider a differentiable curve in $D$ given by $z(t)$ for $t \in (a,b)$. Consider the function $g(z) = z^{(4-d)/8}$ that maps $D$ to $\tilde{D}$, the sector of the upper right quadrant given by $\{w = \rho e^{i\varphi}\, |\, \rho \in (0,\rho_0 = r_0^{(4-d)/8}),\, \varphi \in (0,\frac{\pi}{2})\}$ (see \Cref{fig:hyperbolic}). Let $w(t) = g(z(t))$ for $t \in (a,b)$. Taking the derivative of both sides, we have
  \[w'(t) = g'(z(t))z'(t)\]

  \noindent since $g'(z) \neq 0$ in $D$, we have
  \begin{align*} \label{eq:conformal-relation}
    \arg{(g'(z(t))z'(t))} &= \arg{g'(z(t))} + \arg{z'(t)}\\
                          &= \left(\frac{4-d}{8} - 1\right)\theta + \arg(z'(t))
  \end{align*}

  In the case that $z(t)$ is a solution of equation \ref{eq:ODE}, we have
  \begin{align*}
    \arg{w'(t)} &= \frac{d\theta}{4} + \left(\frac{4-d}{8} - 1\right)\theta\\ 
                &= -\frac{(4-d)\theta}{8} = -\varphi
  \end{align*}

  Thus $w'(t) = \alpha(t)e^{-i\varphi}$ for some function $\alpha(t)$. Writing $w(t) = x(t) + iy(t)$, we have $x'(t) = \alpha(t)\cos(\varphi)$, $y'(t) = -\alpha(t)\sin(\varphi)$. Thus
  \[\frac{dy}{dx} = -\tan(\varphi) = -\frac{y}{x} \implies y = \frac{A}{x}\]

\noindent for some constant $A$. This equation describes the family of hyperbolas in the first quadrant with asymptotes at $\varphi = 0$ and $\varphi = \frac{\pi}{2}$. The curve from this family passing through the point $g(q) = \rho_q e^{i\varphi_q}$ is given by $\{x + iy\, |\, xy = A_q\}$, where $A_q = \rho_q^2\sin{\varphi_q}\cos{\varphi_q}$. The curve $C$ can be recovered by taking the inverse image of this set under the mapping $g$, that is $C = \{(x + iy)^{-(4-d)/8}\, |\, xy = A_q,\, x \in I_x\}$ where $I_x$ is an interval such that $(x+iy)^{-(4-d)/8} \in B(a,r_0)$ for $x \in I_x$.
% \qed
\end{proof}

\Cref{prop:hyperbolic-trajectory} provides a simple method for computing the trajectory of a streamline through a singular triangle. We make the assumption that within the triangle, the estimate
\begin{equation*}
  f(z) \approx e^{i\left(\frac{d\theta}{4} + \frac{2k\pi}{4}\right)}
\end{equation*}

\noindent holds. Here again $\theta = 0$ corresponds to the nearest separatrix clockwise from $q$. Making this assumption, we simply compute points along the hyperbola $xy = A_q$, and take the inverse image of each point. We use these points as discretization points of the streamline so long as they lay within the singular triangle. Since hyperbolas are convex, and $g^{-1}$ preserves the order of points along rays, in order to guarantee that two streamlines don't intersect tangentially, it is sufficient to evaluate the points of the hyperbola along predefined rays from the singular point.

\subsection{Partition Construction and Tangential Crossings} \label{sec:tangential-crossings}
By the Poincar\'e-Bendixson theorem for manifolds \cite{schwartz_generalization_1963}, streamlines of a cross field on a bounded manifold $M$ can do one of the following:
\begin{enumerate}
    \item Connect one or more singularities in a homoclinic or heteroclinic orbit
    \item Exit the boundary
    \item Approach a limit cycle
    \item Approach a limit set that is all of $M$. In this case, $M$ must be a torus.
\end{enumerate}

Because we are tracing out separatrices on a discrete mesh, in practice, they never line up perfectly with another singularity, so homoclinic and heteroclinic orbits will never occur. Thus streamlines in the discrete case can only either exit the boundary, or continue forever approaching either a limit cycle, or a limit set that is all of $M$. In order to generate a quad partition via separatrix tracing, separatrices that continue forever must be cut off after crossing some separatrix orthogonally. In practice, we use two stopping conditions: separatrices are traced until they either exit the boundary or cross the same separatrix more than once. The second condition is a simple way to eliminate the possibility of tracing out a separatrix forever, but can potentially create T-junctions on separatrices that would eventually exit the boundary.

When tracing streamlines using a numerical method such as Heun's method, there is no guarantee that streamlines won't cross each other or exit the boundary tangentially. This becomes especially problematic along boundaries of meshes where the underlying geometry has high curvature but few triangles, resulting in few crosses that are actually aligned with the discrete boundary of the triangle mesh. Tangential crossings are problematic because the regions produced via separatrix tracing are no longer guaranteed to be four-sided. Assuming a sufficiently fine triangle mesh along the boundary such that no separatrices exit tangentially, we observe in practice that tangential crossings on the interior typically occur in one of two cases. The first case is when one or more separatrices that approach a limit cycle are traced out for several rotations around the limit cycle. This problem is virtually eliminated by our approach of cutting off separatrices after they cross the same separatrix more than once.

The second case where tangential crossings occur is when there is a very small misalignment of singularities, such that two different separatrices follow virtually the same path. If the two separatrices are heading in opposite directions when the crossing occurs, then this problem can easily be fixed by cutting both separatrices off at the tangential crossing and combining them into a single separatrix, now connecting the two singularities in a heteroclinic orbit. If both separatrices are traveling in the same direction when they cross tangentially, there is no analogous simple operation to combine the two. However, we have observed that in practice, this almost always occurs when one of the separatrices passes very near the singularity where the other began. To mitigate the occurrence of tangential crossings when both separatrices are traveling in the same direction, we add a third stopping criteria for tracing separatrices: we cut off any separatrix at a T-junction inside a singular triangle when it orthogonally crosses a separatrix leaving leaving the singularity. This third stopping condition greatly reduces the number of tangential crossings that occur when both streamlines are traveling in the same direction.

\section{Partition Simplification} \label{sec:partition-simplification}

The misalignment of singularities when tracing out separatrices often results in small regions and limit cycles in the initial partition that would not exist if the separatrices coincided. In this section, we present a robust algorithm to simplify the partition obtained by naive separatrix tracing. The central step in the algorithm is an operation that extends the chord collapse operation for quad meshes \cite{borden_hexahedral_2002,daniels_quadrilateral_2008} to quad layouts with T-junctions. A similar collapse operation appears in Myles et al.\ \cite{myles_robust_2014}.

A \emph{chord} in a quad mesh is a maximal sequence of quads, $q_1, q_2, \dots, q_n$ such that $q_i$ is adjacent to $q_{i+1}$, and $q_{i-1}$ and $q_{i+1}$ are on opposite sides of $q_i$. \Cref{fig:chord} shows a chord of a quad mesh highlighted in blue. A partition obtained from streamline tracing is a quad layout with T-junctions, or a \emph{T-layout} for short. We say that each component of a T-layout has four total \emph{sides}, 2 pairs that are opposite each other. A side consists of at least one edge or more when T-junctions occur on that side. A \emph{chord of a T-layout} is a maximal sequence of components, $c_1, c_2, \dots, c_n$ such that $c_i$ is adjacent to $c_{i+1}$, $c_{i-1}$ and $c_{i+1}$ are on opposite sides of $c_i$, and no T-junction exists between $c_i$ and $c_{i+1}$. \Cref{fig:t-chords} top shows various chords in a T-layout.
\begin{figure}[h!]
  \centering
  \includegraphics[width=0.35\linewidth]{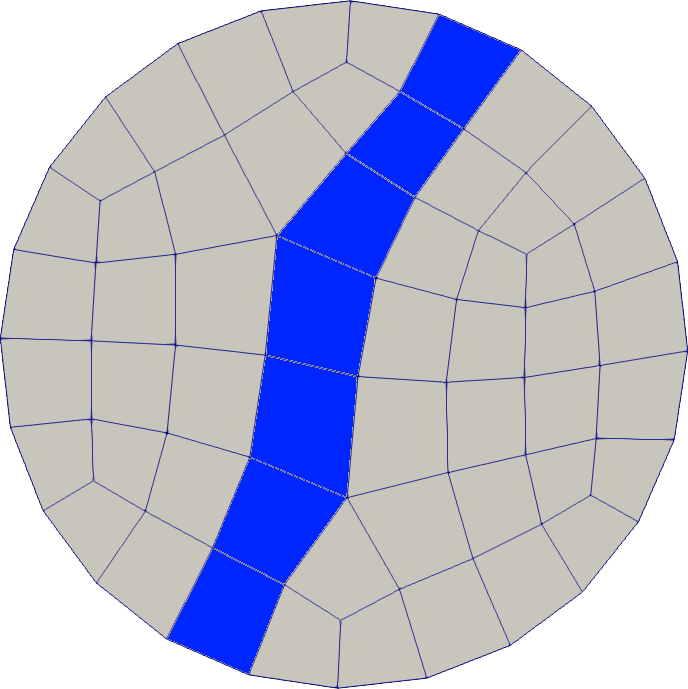}
  \caption{A chord in a quad mesh.}
  \label{fig:chord}
\end{figure}

\begin{figure}[h]
  \begin{center}
    \includegraphics[width=0.8\linewidth]{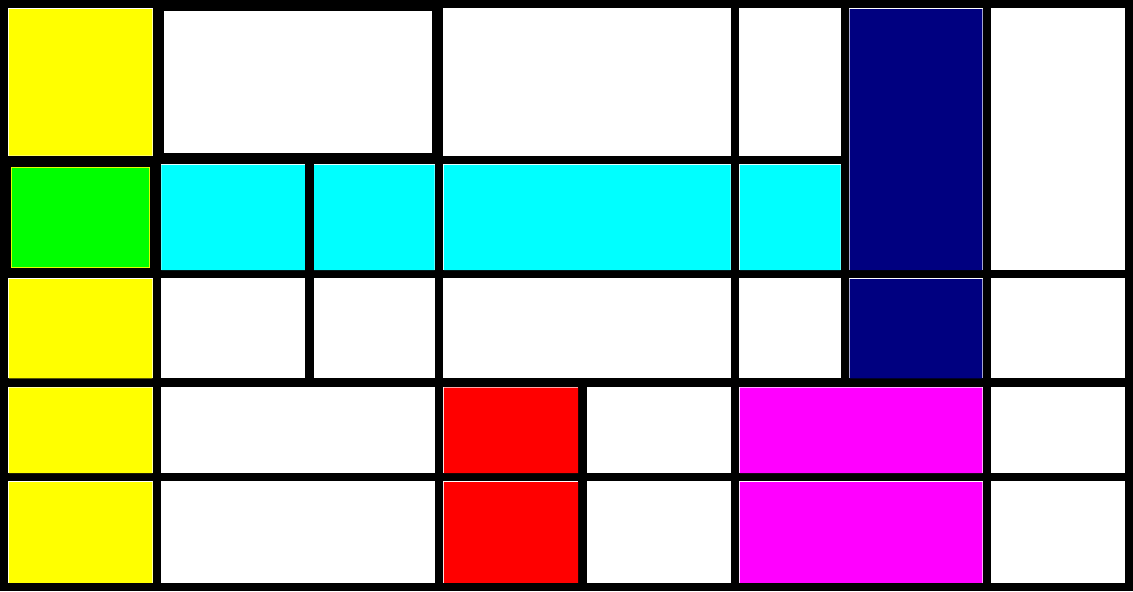}\\
    \includegraphics[width=0.8\linewidth]{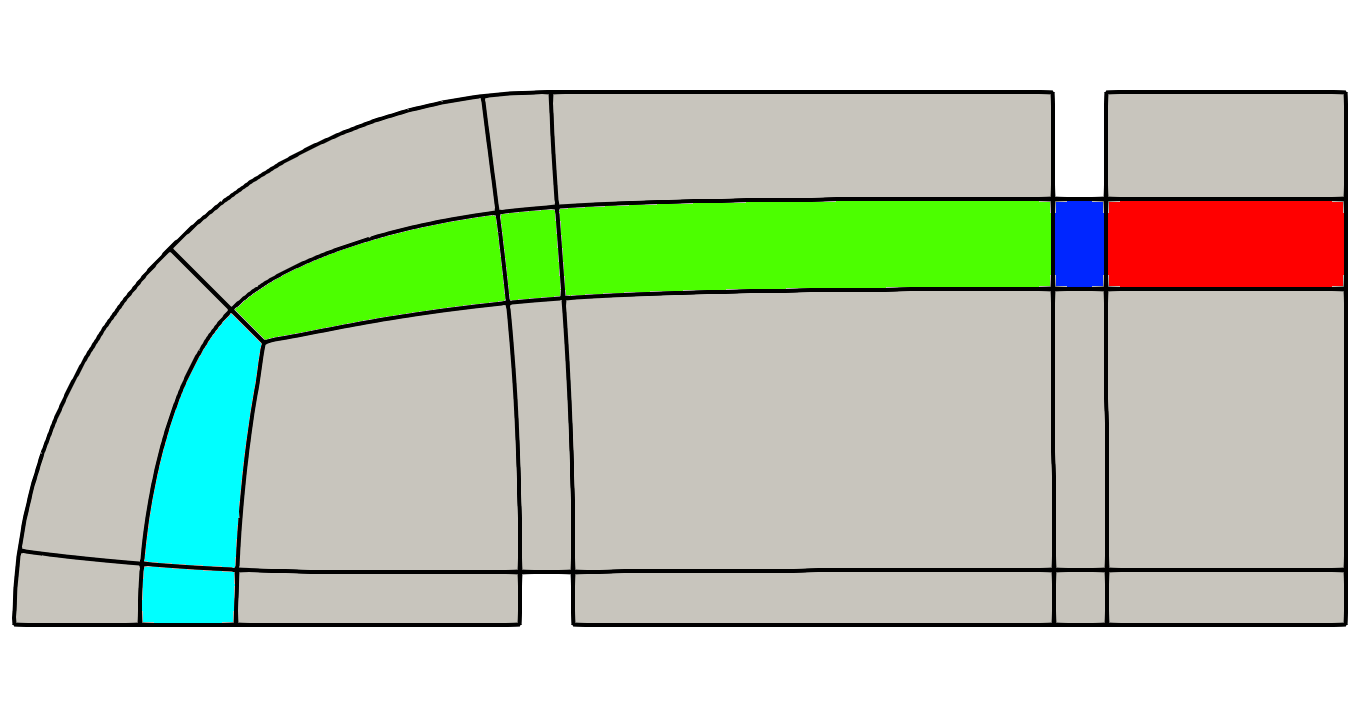}
  \end{center}
  \caption{Illustrations of chords in two T-layouts. {\bf Top:} Chords in a T-layout shown in various colors. The yellow and cyan chords overlap on the green component, illustrating how each component is part of two chords. {\bf Bottom:} The four patches of a chord highlighted in cyan, green, blue, and red.}
  \label{fig:t-chords}
\end{figure}

We call the set of edges shared by two partition components in a chord the \emph{transverse rungs} of a chord. In the case that a chord begins or ends at a T-junction or on a boundary, we also include the first and last set of edges as transverse rungs of the chord. We also say that a chord has two \emph{longitudinal sides} that consist of all the edges of the partition components that are orthogonal to the transverse rungs.

A \emph{patch} is a maximal subset of consecutive components of a chord such that singularities only occur on the first and last transverse rungs. A chord is partitioned into one or more patches, and singularities can occur only on the corners of patches. \Cref{fig:t-chords} bottom shows the patches of a chord.

    \begin{figure}[h!]
      \begin{center}
        \includegraphics[width=\linewidth]{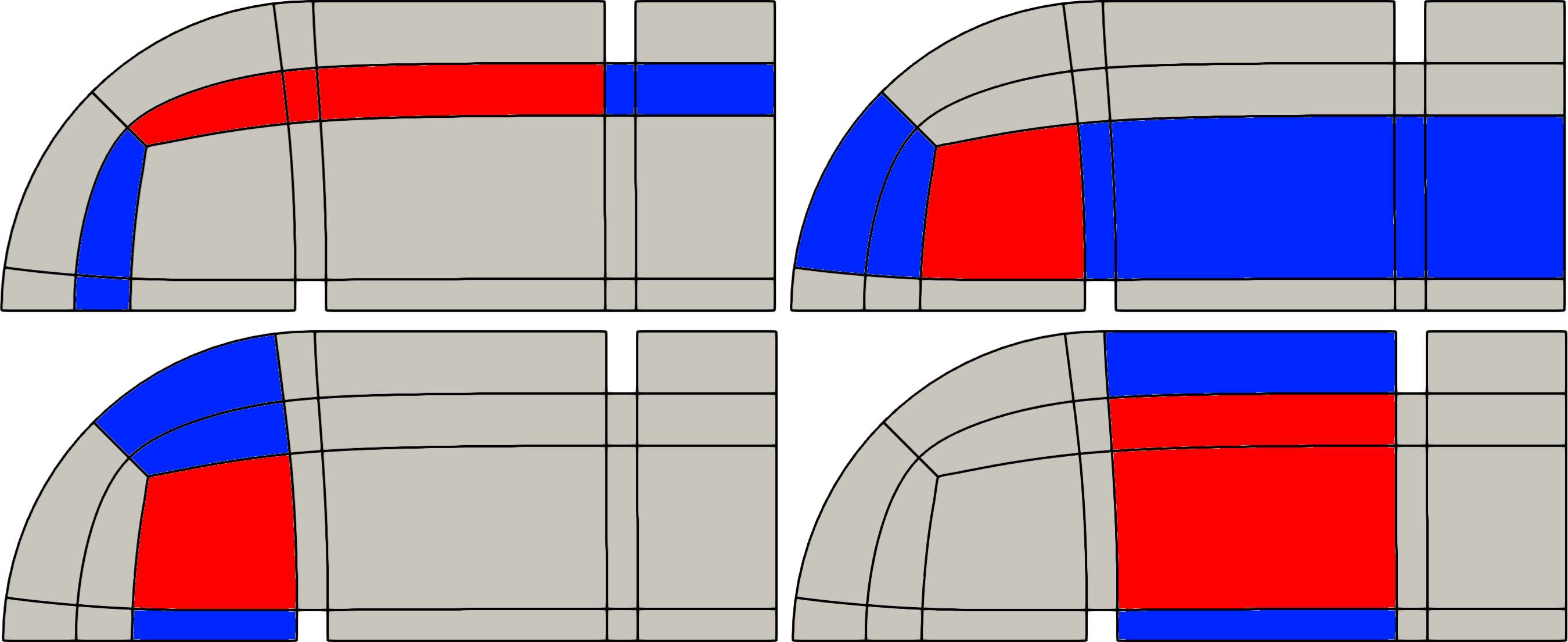}
      \end{center}
        \caption{The four collapsible chords of a partition. Zip patches are highlighted in red and non-zip patches are highlighted in blue.}
        \label{fig:collapsible}
    \end{figure}
Our definition of chord collapse on a T-layout is motivated by the goal of simplifying the partition by removing one separatrix from each of two singularities, and then connecting singularities together by a single curve. We say that a patch of a chord in a T-layout is \emph{collapsible} if it satisfies the following:
\begin{enumerate} \label{enum:collapsible}
  \item No singularities are connected across any transverse rung of the patch.
  \item No singularity is connected to a boundary across any transverse rung of the patch.
  \item If the patch starts or ends at a T-junction, then one of the following must be satisfied:
  \begin{enumerate}
    \item The node opposite the T-junction on the same transverse rung is a singularity.
    \item The node opposite the T-junction on the same transverse rung is another T-junction with the same orientation.
    \item The node on the opposite corner of the patch from the T-junction is a singularity.
  \end{enumerate}
\end{enumerate}

\noindent We say that a chord is \emph{collapsible} if all of its patches are collapsible.

The first and second conditions prevent the possibility of having to combine two singularities into a single one or move a singularity to the boundary. They reflect an assumption that throughout the simplification process, we would like to keep the singularity set of the cross field and only modify the connectivity of the singularity graph. The third condition prevents the introduction of new T-junctions when collapsing chords or other invalid configurations such as a node with only two edges meeting at a corner. \Cref{fig:collapsible} shows the collapsible chords for a given quad layout.

Given these assumptions, we can define a collapse operation on a collapsible chord. We define this operation by defining two sub-operations on patches. On a collapsible chord, any patch will either have singularities on opposite corners, or it will have one or two singularities only on one longitudinal side. We refer to the former as a \emph{zip} patch, and the latter as a \emph{non-zip} patch. The green patch in \Cref{fig:t-chords} bottom is a zip patch and the other 3 are non-zip patches.

The collapse operation on a non-zip patch is to simply delete the edges on the longitudinal side without any singularities. The operation on a zip patch is to remove both longitudinal sides of the patch and replace them with a single line that connects the two singularities together. In practice, we take a weighted average of the two sides, figuratively ``zipping'' the two edges together to form the new line. If any T-junctions occur on a side that is deleted during a collapse, the hanging separatrix is simply extended after the collapse operation until it crosses the next separatrix. \Cref{fig:simplification} illustrates three consecutive chord collapses used to simplify a partition. The next chord to be collapsed in each frame has its zip patches highlighted in red and its non-zip patches highlighted in blue.

This collapse operation effectively replaces the two longitudinal sides of a chord with a single curve passing through each of the singularities on either side. It is easy to see that the resulting graph is still a T-layout because the local connectivity at singularities is not changed and other crossings of separatrices are either unaffected or simply removed (see the proof of theorem 5.4 in \cite{viertel_approach_2019}). We summarize this section with the following proposition,
\begin{figure}[h]
  \begin{center}
    \includegraphics[width=\linewidth]{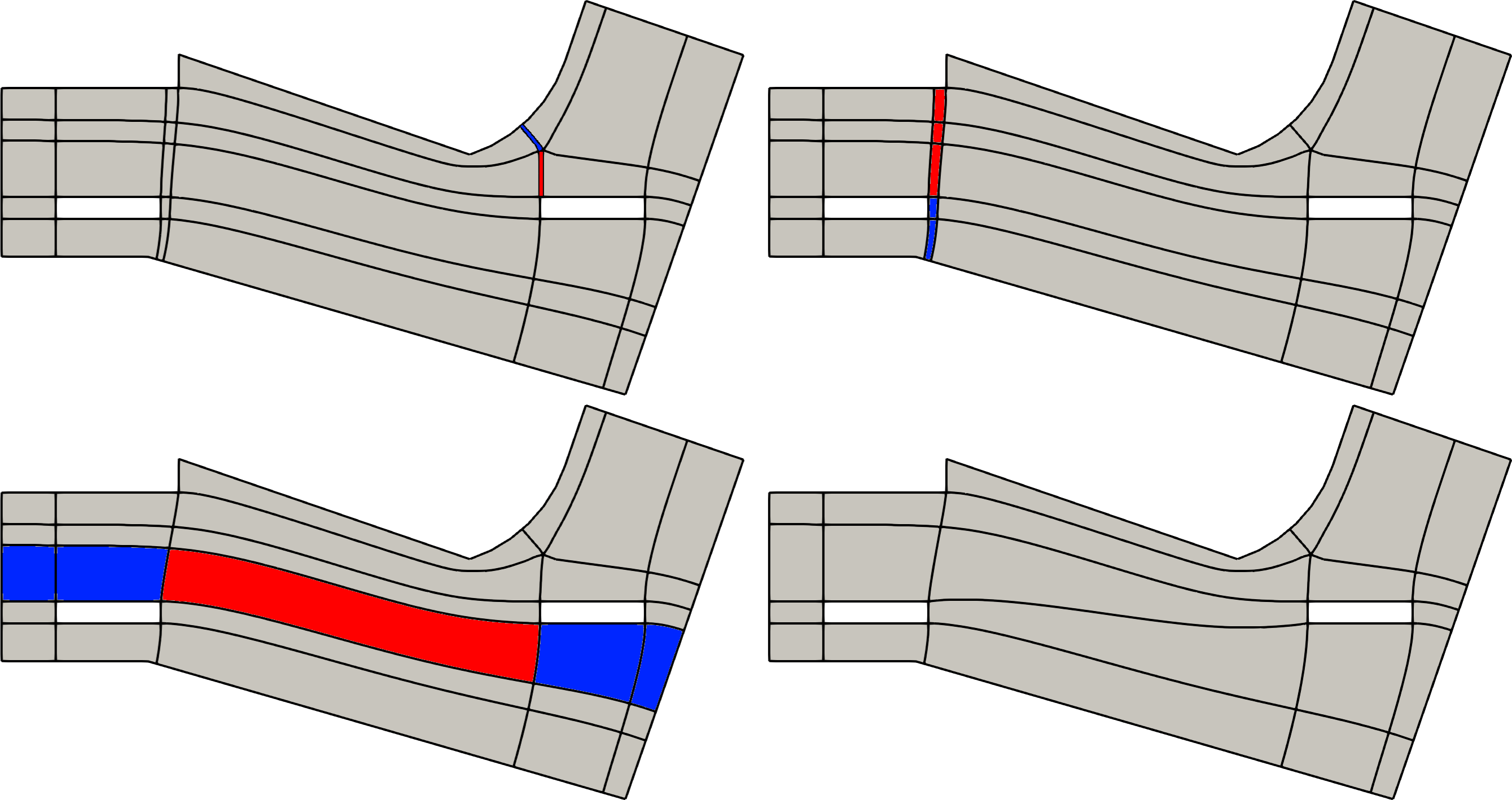}
  \end{center}
  \caption{Three consecutive chord collapses simplify the quad layout. Zip patches to be collapsed at each step are colored in red and non-zip patches are colored in blue.}
  \label{fig:simplification}
\end{figure}
\begin{proposition}
  Each chord collapse operation removes a chord from the T-layout, resulting in another T-layout with the same irregular nodes on the boundary and interior. A series of collapses monotonically reduces the number of T-junctions in the layout, and strictly decreases the number of partition components.
\end{proposition}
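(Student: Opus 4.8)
The plan is to establish the four assertions --- validity of the resulting layout, invariance of the irregular node set, monotone behavior of the T-junction count, and strict decrease of the component count --- first for a single chord collapse, and then to obtain the ``series'' statements by induction, using the strict decrease in component count to guarantee termination. Since the collapse of a collapsible chord is defined patch by patch, and each patch satisfies the collapsibility conditions, I would reduce the whole argument to a finite check of the local configurations that can occur at the corners and transverse rungs of a single patch. The interior of a patch is structurally trivial (rectangular), so all of the content lives at the patch ends.

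First, for validity I would show that every region surviving a collapse is still four-sided. In the interior of a patch there are, by definition, no singularities on the longitudinal sides, so each component is a genuine quadrilateral and the two longitudinal sides run parallel between consecutive transverse rungs; deleting the singularity-free side (non-zip) or identifying both sides along the spine curve through the singularities (zip) merges each such component with its neighbor across the removed side without changing the side count of the merged region. Here I would lean on the argument in the proof of Theorem 5.4 of \cite{viertel_approach_2019} that the local connectivity at each singularity is untouched: the replacement curve passes through exactly the same ports, so valences are preserved and no region degenerates. The only places this can fail are the two ends of a patch, i.e.\ at patch corners carrying T-junctions; condition 3 is precisely the hypothesis that forbids a merge producing a corner with only two incident edges or a region with fewer than four sides, and I would verify four-sidedness by a short case analysis over its three sub-cases.

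Invariance of the irregular nodes is then essentially immediate: interior irregular nodes are the singularities and boundary irregular nodes are the nonzero-index corners, and conditions 1 and 2 forbid a transverse rung joining two singularities or joining a singularity to the boundary, so no collapse ever identifies two singularities or pushes one onto the boundary. For the T-junction count I would argue that a single collapse never increases it: a T-junction lying on a deleted longitudinal side becomes a hanging separatrix which is extended until it next meets a separatrix, and by condition 3 this reroute terminates at a singularity or along an existing separatrix without manufacturing a new T-junction, while any T-junction that served as an endpoint of the collapsed chord is consumed; the monotonicity over a series follows by induction. Finally, a chord contains $n \ge 1$ components lying strictly between its two longitudinal sides, and identifying those sides into a single spine eliminates exactly these $n$ interiors while leaving the neighboring regions intact, so each collapse strictly reduces the number of components --- which both proves the strict-decrease claim and guarantees termination of any series.

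The main obstacle I anticipate is the validity step at the ends of patches: confirming, through the case analysis dictated by condition 3, that no collapse ever produces a two-sided region, a degenerate corner, or a spurious T-junction. All of the real work is in checking that the three sub-cases of condition 3 --- a singularity opposite the T-junction on the same rung, an equally oriented T-junction opposite it, or a singularity at the far corner of the patch --- are exactly the configurations under which the merge closes up into valid four-sided regions while the T-junction count does not rise; everything in the patch interior reduces to the classical quad chord collapse and is routine.
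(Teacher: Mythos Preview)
Your plan is a reasonable expansion of what the paper actually does, which is essentially nothing formal: the proposition is stated without proof, as a summary of the preceding discussion, with validity justified only by the one-line remark that ``the local connectivity at singularities is not changed and other crossings of separatrices are either unaffected or simply removed'' and a pointer to Theorem~5.4 of \cite{viertel_approach_2019}. Your patch-by-patch reduction and appeal to that same theorem are in the same spirit, just more explicit.

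One imprecision is worth correcting. In your T-junction argument you invoke condition~3 to control what happens when a T-junction lies on a \emph{deleted longitudinal side} and its hanging separatrix must be re-extended. Condition~3 says nothing about those T-junctions; it only constrains T-junctions at the \emph{ends} of a patch (on the first or last transverse rung). When a hanging separatrix is extended after a collapse, it will in general terminate at a fresh orthogonal crossing and thereby create a new T-junction --- but the old T-junction on the now-deleted side has been destroyed, so the net change for that separatrix is at most zero. That replacement bookkeeping, together with the consumption of any end-of-chord T-junctions handled by condition~3, is what gives monotonicity; invoking condition~3 for the interior longitudinal T-junctions is a misattribution, and the claim that the reroute proceeds ``without manufacturing a new T-junction'' is not literally true.
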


This simple operation forms the core of our partition simplification algorithm. As \Cref{fig:simplification,fig:spline} illustrate, repeated application of this operation has the potential to dramatically simplify a T-layout obtained from separatrix tracing. We take a greedy approach, collapsing first the thinnest chord that satisfies all conditions for collapse. The full loop is described in \cref{alg:simplification}.
\begin{figure}[h]
  \begin{center}
    \includegraphics[width=\linewidth]{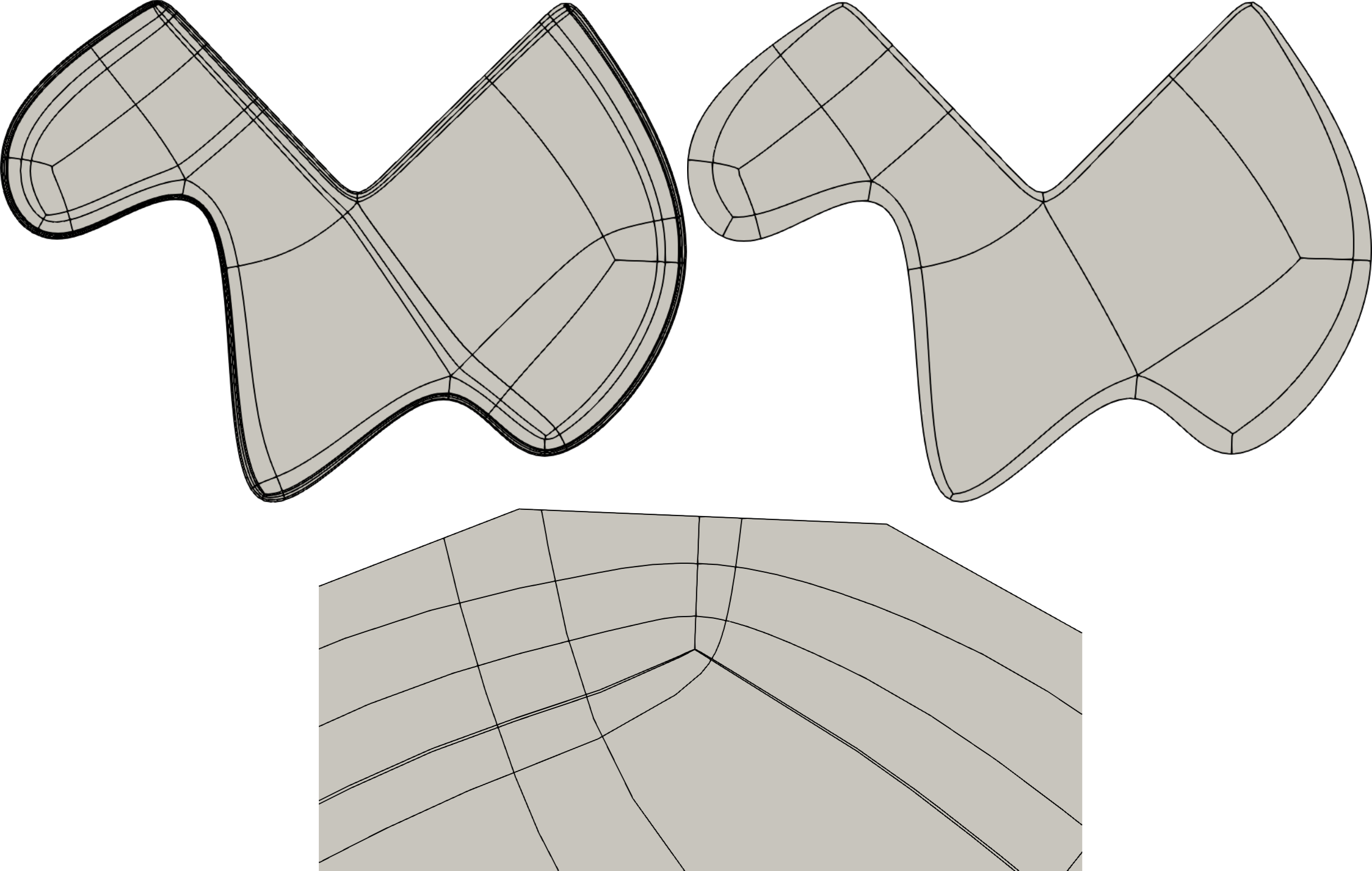}
  \end{center}
  \caption{Before and after partition simplification. {\bf Top Left:} The initial partition obtained by tracing separatrices. {\bf Top Right:} A simplified partition after 10 chord collapses. {\bf Bottom:} A close up of the top left corner of the geometry reveals extremely small components that occur because of misalignment in singularities in the initial partition.}
  \label{fig:spline}
\end{figure}
\begin{algorithm}
\caption{Partition simplification} \label{alg:simplification}
\begin{algorithmic}
\vspace{.2cm}
\STATE Let $\Gamma$ be the set of collapsible chords of the partition

\WHILE {$|\Gamma| > 0$}
\IF{No chords meet the conditions for collapse}
    \STATE $\mathbf{Stop.}$
\ELSE
    \STATE Collapse the chord with the smallest minimum width
    \STATE Determine new set of collapsible chords $\Gamma$
\ENDIF
\ENDWHILE

\end{algorithmic}
\end{algorithm}

\subsection{Conditions for Collapse}

It is not always beneficial to collapse every collapsible chord. \Cref{fig:collapsible} highlights four collapsible chords in a partition, but by most measures, it would only be beneficial to collapse the thinnest of the chords, since collapsing the others would lead to severe deformation in the newly created partition components adjacent to the zipped edge. The decision of whether to collapse is also application dependent. For example, in the final chord collapse in \Cref{fig:simplification}, the difference in length on opposite sides of the regions adjacent to the zipped separatrix may outweigh the cost of a slightly more complex partition, depending on the application.

A complete exploration of how different applications might benefit from various collapse conditions will not be treated here, rather we only describe the conditions used in our examples. We define an \emph{energy} for each patch and we subsequently define the energy of the chord as the minimum energy of any of its patches. The collapse condition evaluates to true if the energy is positive and false if the energy is negative.

For a non-zip patch, we set the energy to a positive constant value. The exact value is not important, this simply reflects the notion that collapsing a non-zip patch is not detrimental to the overall quality of the partition.

For zip patches, let $w$ be the mean of the length of each transverse rung of the patch. Let $l$ be the mean of the length of each longitudinal side of the chord. The energy for the patch is then defined as

  \begin{table*}[h!]
    \caption{Basic data for our pipeline on several models}
      \label{tab:data}

      \begin{tabular}{l@{\hskip 0.3cm}l@{\hskip 0.5cm} ccc@{\hskip 0.5cm} cc@{\hskip 0.5cm} cc@{\hskip 0.5cm} c}
        \hline\noalign{\smallskip}
        &  & Cross Field & Tracing & Simp. & \multicolumn{2}{l}{Components} & \multicolumn{2}{l}{T-junctions} & Chord \\
        Model & $n$ & (s) & (s) & (s) & Before & After & Before & After & Collapses\\
        \noalign{\smallskip}\hline\noalign{\smallskip}
        cognit & 7274 & 0.444 & 0.274 & 0.450 & 583 & 190 & 86 & 0 & 72\\
        chainr5 & 4781 & 0.105 & 0.318 & 0.494 & 440 & 176 & 88 & 0 & 61\\
        gluegun & 1842 & 0.074 & 0.124 & 0.254 & 725 & 189 & 36 & 0 & 45\\
        \hline\noalign{\smallskip}
        sprayer & 954 & 0.037 & 0.027 & 0.011 & 29 & 12 & 6 & 0 & 4\\
        faceplate & 47655 & 7.036 & 0.989 & 2.270 & 1500 & 227 & 120 & 0 & 101\\
        part29 & 3265 & 0.121 & 0.051 & 0.023 & 66 & 22 & 3 & 0 & 6\\
        \hline\noalign{\smallskip}
        test1 & 2703 & 0.115 & 0.044 & 0.027 & 47 & 19 & 1 & 0 & 5\\
        engine2 & 564 & 0.025 & 0.053 & 0.057 & 194 & 63 & 11 & 1 & 15\\
        pump & 2592 & 0.067 & 0.249 & 0.821 & 1014 & 303 & 88 & 4 & 80
      \end{tabular}
  \end{table*}
\[ e = \frac{\pi}{8} - \arctan{\frac{w}{l}}.\]
If the zip patch were perfectly rectangular, then $\arctan{\frac{w}{l}}$ would be equivalent to the angle that the diagonal makes with the base of the rectangle. In rough terms, this condition prevents chord collapses that result in a large deformation of the angles that separatrices make at singularities.

We found this particular collapse condition and the heuristic of collapsing thinnest chords first to produce quads with more rectangular corners than other collapsing strategies that we tried. \Cref{fig:energy-comparison} shows a comparison between the results of collapsing a given initial partition using the strategy that we describe versus the strategy of greedily collapsing chords via our chord collapse operation, but using an energy analogous to that used in Tarini et al.\ \cite{tarini_simple_2011} and Razafindrazaka et al.\ \cite{razafindrazaka_perfect_2015}.
\begin{figure}[h!]
  \begin{center}
    \includegraphics[width=0.8\linewidth]{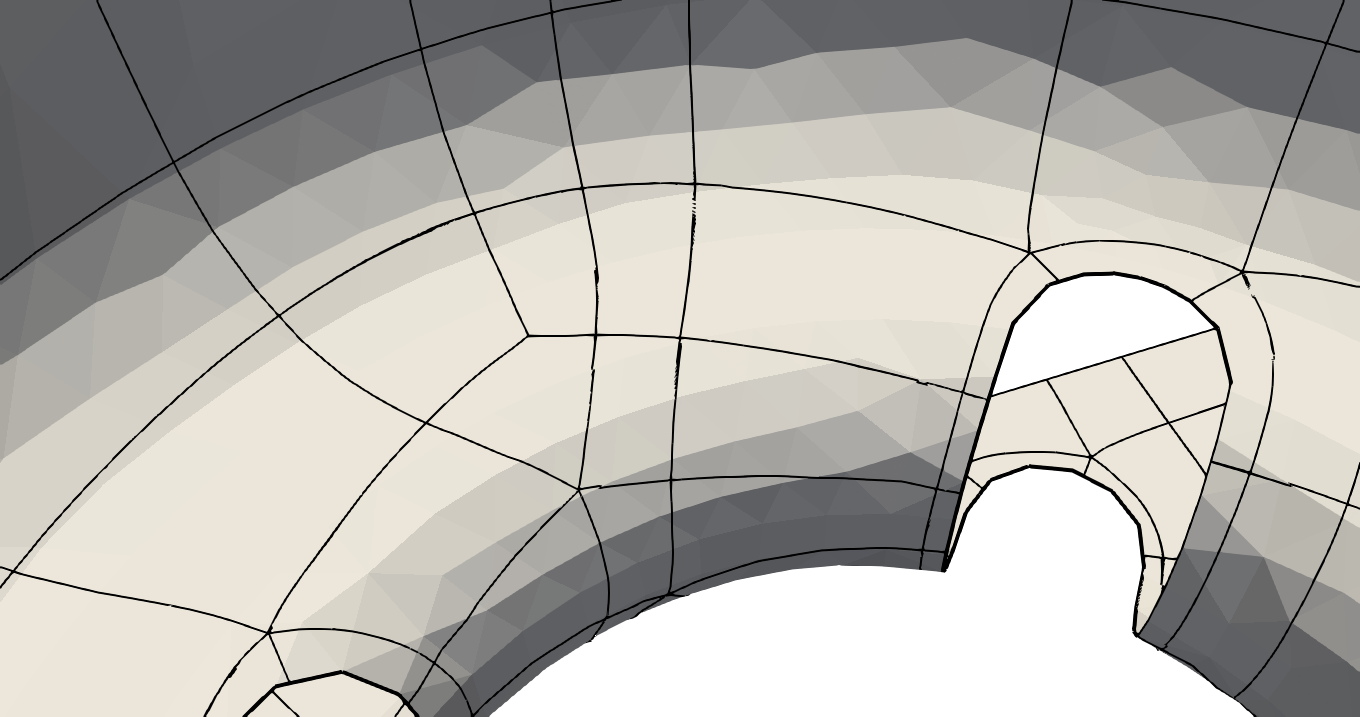} \\
    \includegraphics[width=0.8\linewidth]{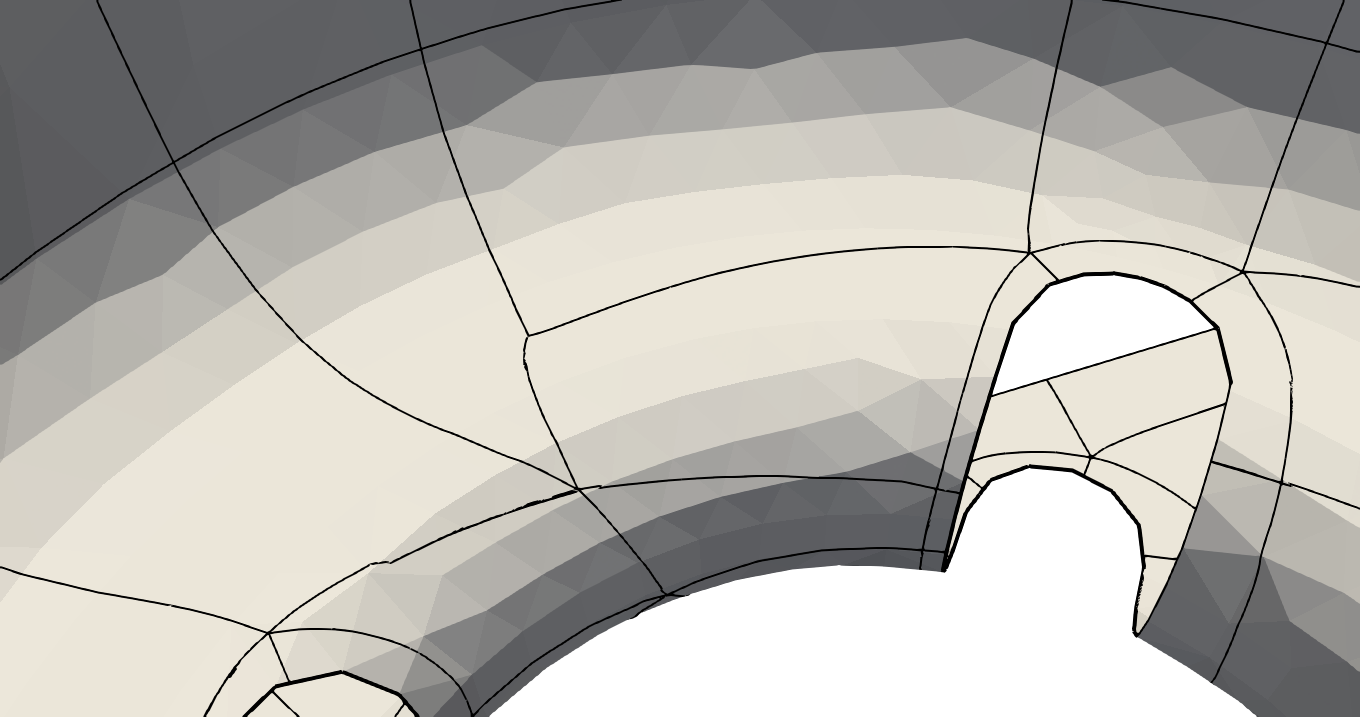}
  \end{center}
  \caption{A comparison of collapse strategies. {\bf Top:} The partition obtained by collapsing an initial partition according to the strategy defined in \cref{alg:simplification}. {\bf Bottom:} Result of collapsing the same partition using a greedy strategy collapsing chords in the order of highest to lowest energy using an energy analogous to that used in Tarini et al.\ \cite{tarini_simple_2011} and Razafindrazaka et al.\ \cite{razafindrazaka_perfect_2015}. This strategy is over-aggressive in collapsing chords and we conclude that the energy does not work well with the chord collapse approach.}
  \label{fig:energy-comparison}
\end{figure}

While we found the strategy of collapsing chords according to the conditions specified in this section to work well in our examples, it is a simple matter to substitute the sorting function and conditions for collapse in this algorithm with whatever is deemed appropriate for the application at hand.

\section{Numerical Experiments} \label{sec:results}

We tested our algorithm on 100 triangle meshes of surfaces with boundary derived from CAD models. All of the models except for the ``faceplate'' model are from a test suite used for development of the CUBIT software \cite{sandia_cubit_2017}. The ``faceplate'' model is the faceplate of the motor from the fan model at \url{https://grabcad.com/library/electric-fan-model-1}. For the diffusion generated method, we used a time step $\tau = 1/\lambda_1$ where $\lambda_1$ is the first eigenvalue of the matrix $A$. We continued the iterations until $\|\vec{u}_k - \vec{u}_{k-1}\| < \sqrt{2n} \times 10^{-6}$ where $n$ is the number of free nodes in the mesh. All examples were run on an Intel Core i5-2420m on a single thread.

In \Cref{tab:data} and \Cref{fig:examples,fig:more-examples,fig:even-more-examples}, we present nine example models that are representative of the models used and results obtained in our experiment. \Cref{tab:data} shows data for the number of nodes in the triangle mesh, timing for the diffusion generated method, streamline tracing, and partition simplification methods, the number of components and T-junctions before and after simplification, and the total number of chord collapses performed. \Cref{fig:examples,fig:more-examples,fig:even-more-examples} show the initial partition obtained via streamline tracing on the top and the final simplified partition on the bottom. The models in the table are shown in the same order as they appear in the figures, and the horizontal lines in \Cref{tab:data} identify the cutoffs between figures.

\begin{figure*}[h!]
  \includegraphics[width=0.33\linewidth]{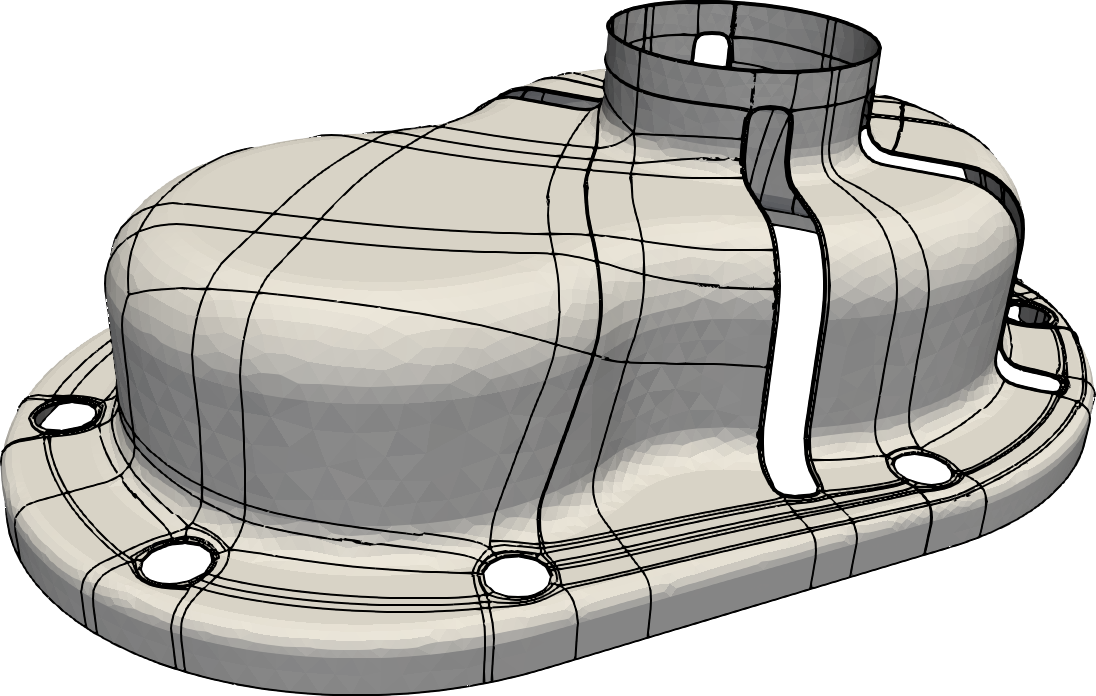}
  \includegraphics[width=0.33\linewidth]{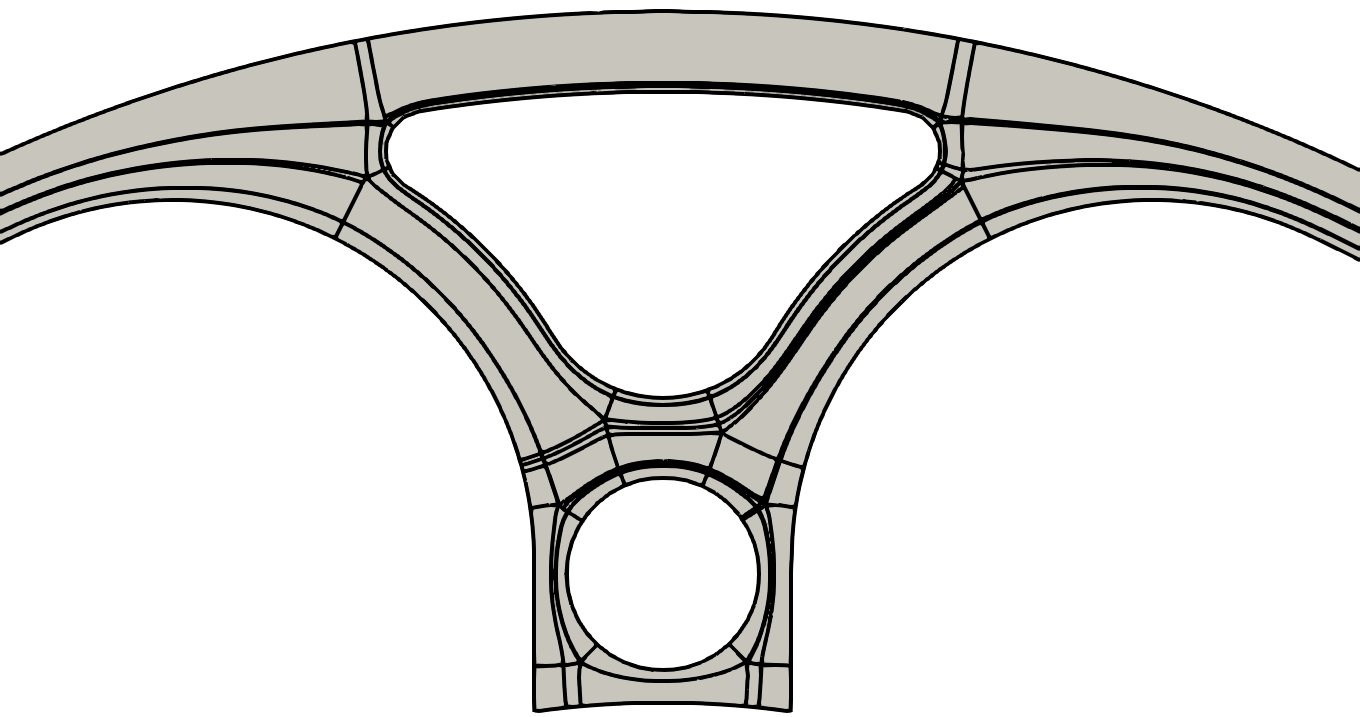}
  \includegraphics[width=0.33\linewidth]{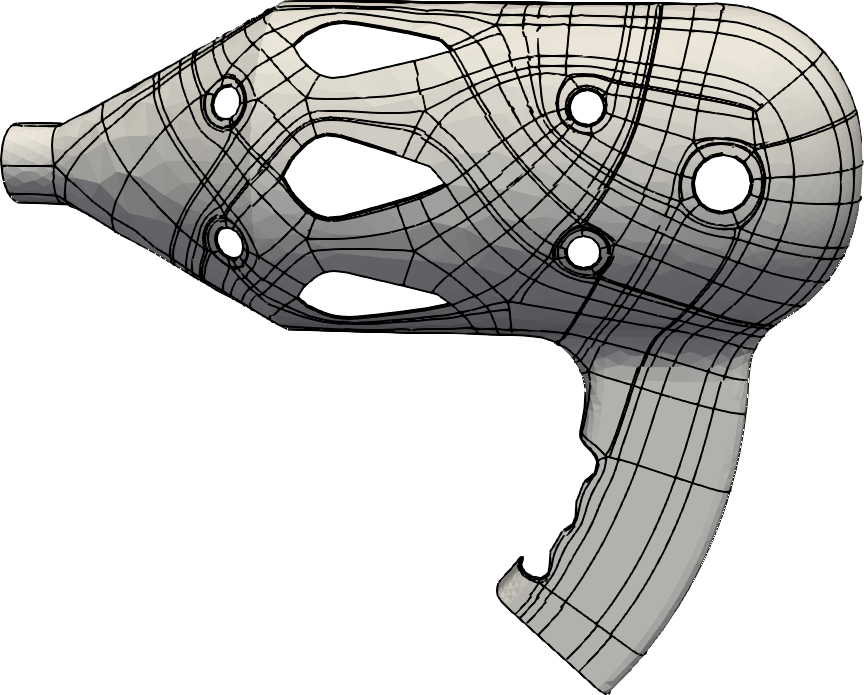}\\
  \includegraphics[width=0.33\linewidth]{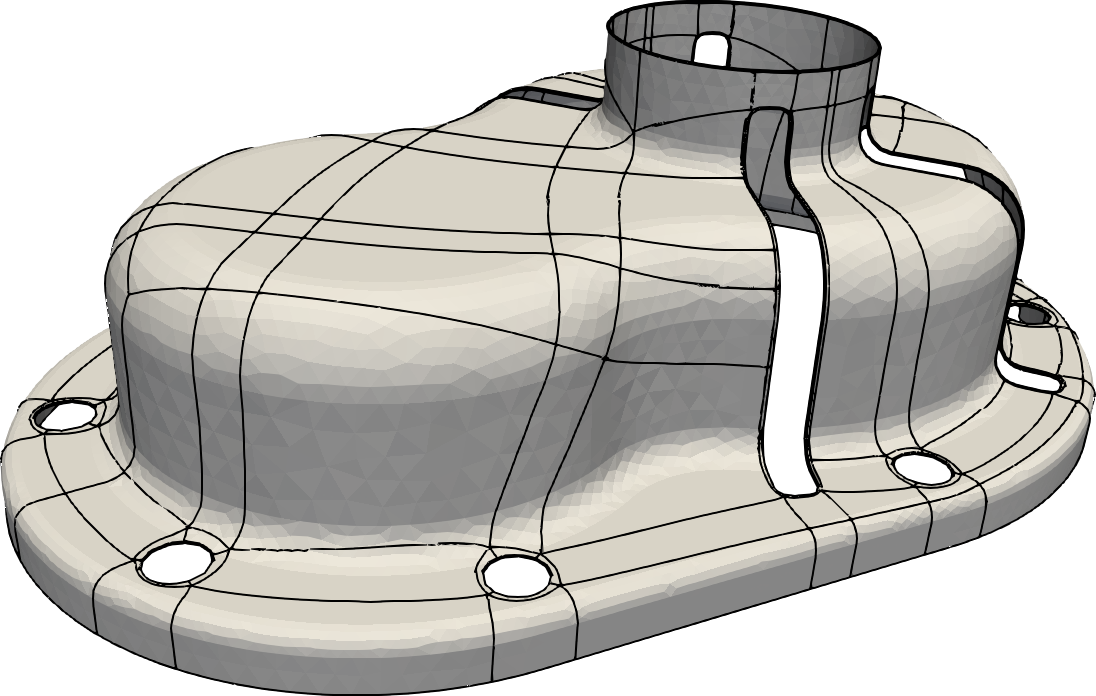}
  \includegraphics[width=0.33\linewidth]{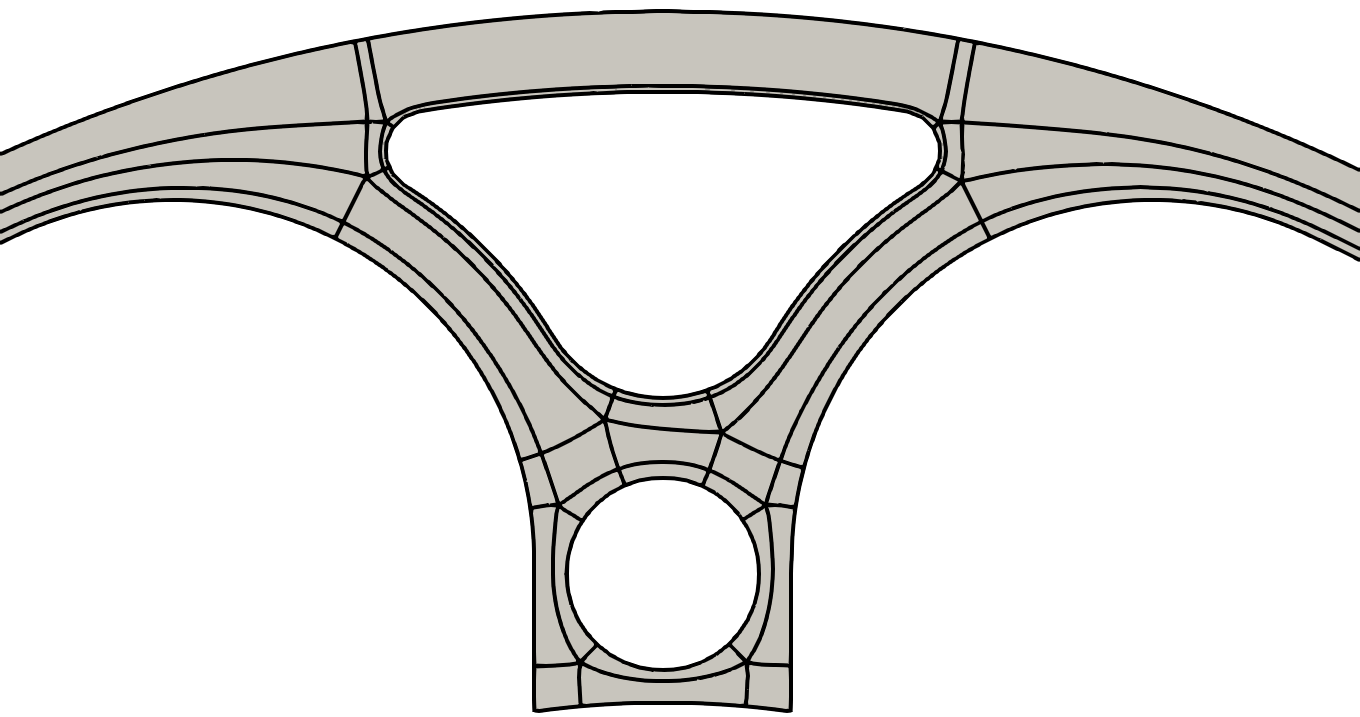}
  \includegraphics[width=0.33\linewidth]{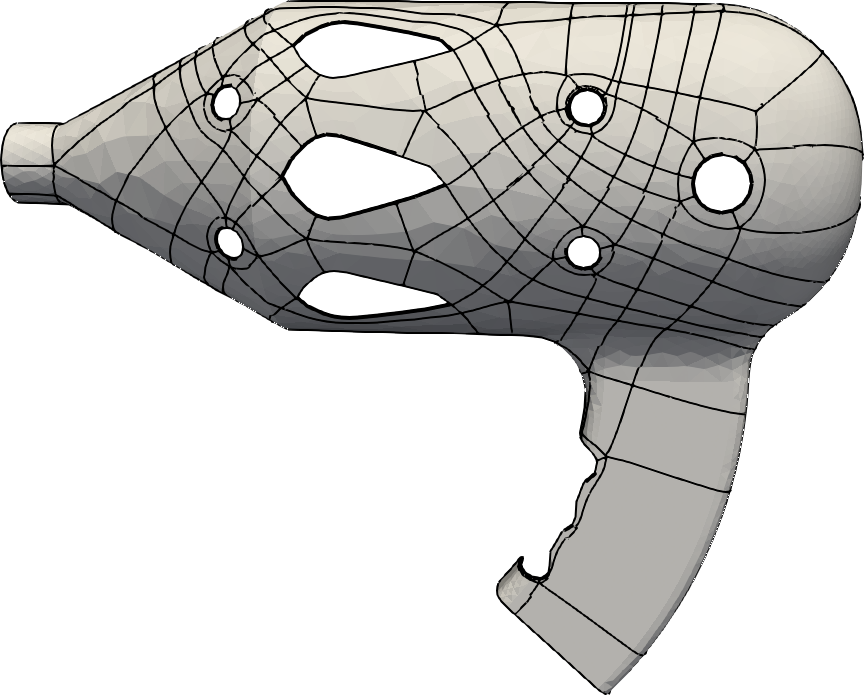}

  \caption{Examples of partitions simplified by our algorithm. The models from left to right are \emph{cognit}, \emph{chainr5}, and \emph{gluegun}. {\bf Top:} The initial partition obtained by tracing streamlines of the cross field obtained via the diffusion generated method. {\bf Bottom:} A simplification of the partition on the top via our method.}
  \label{fig:examples}
\end{figure*}
\begin{figure*}[h!]
    \includegraphics[width=0.33\linewidth]{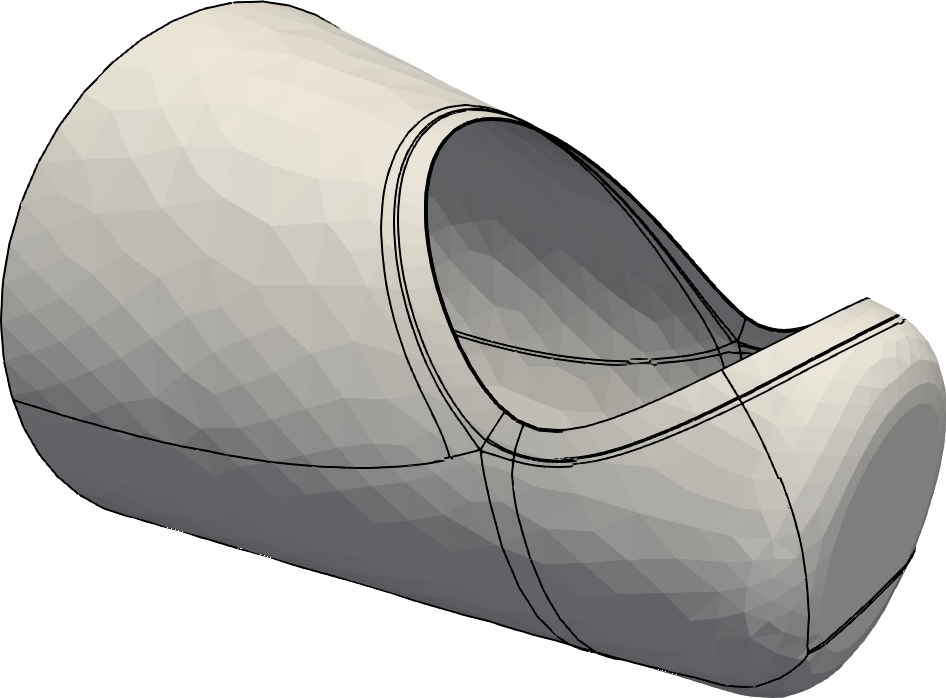}
    \includegraphics[width=0.33\linewidth]{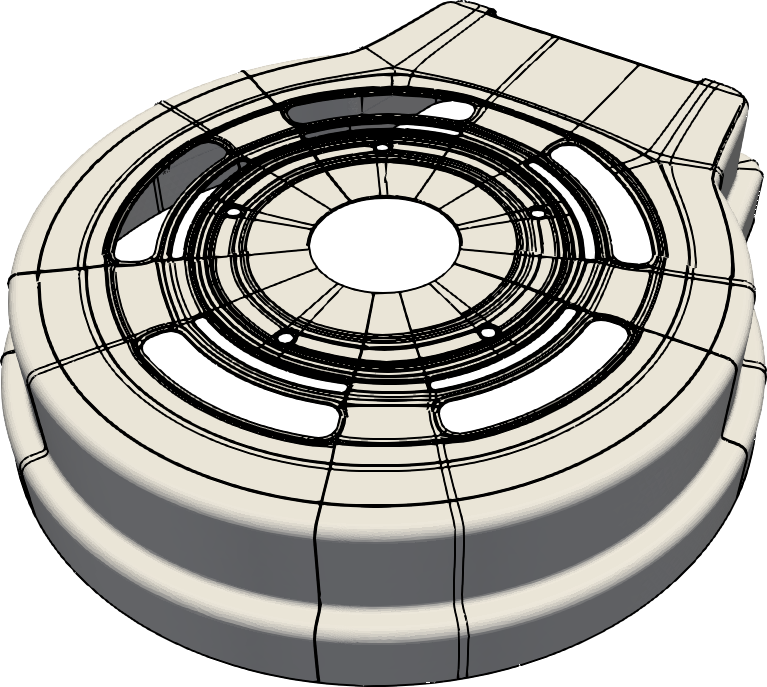}
    \includegraphics[width=0.33\linewidth]{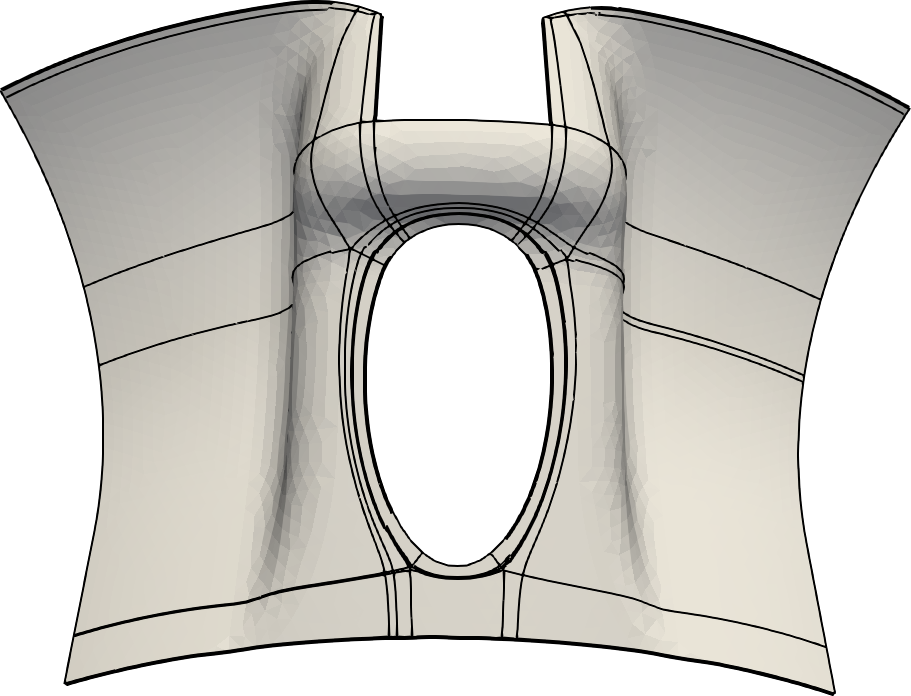}\\
    \includegraphics[width=0.33\linewidth]{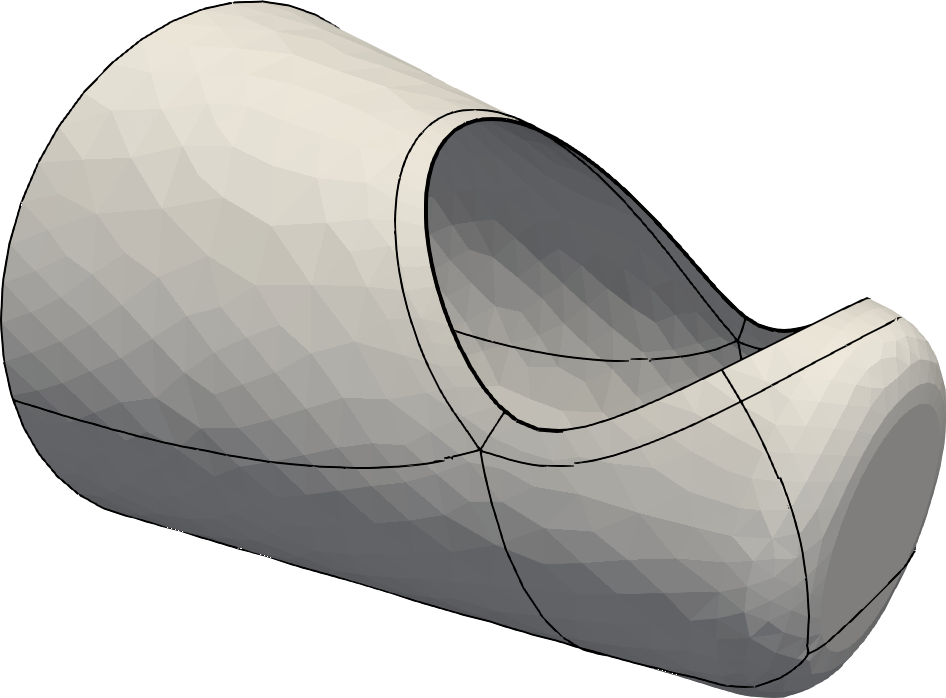}
    \includegraphics[width=0.33\linewidth]{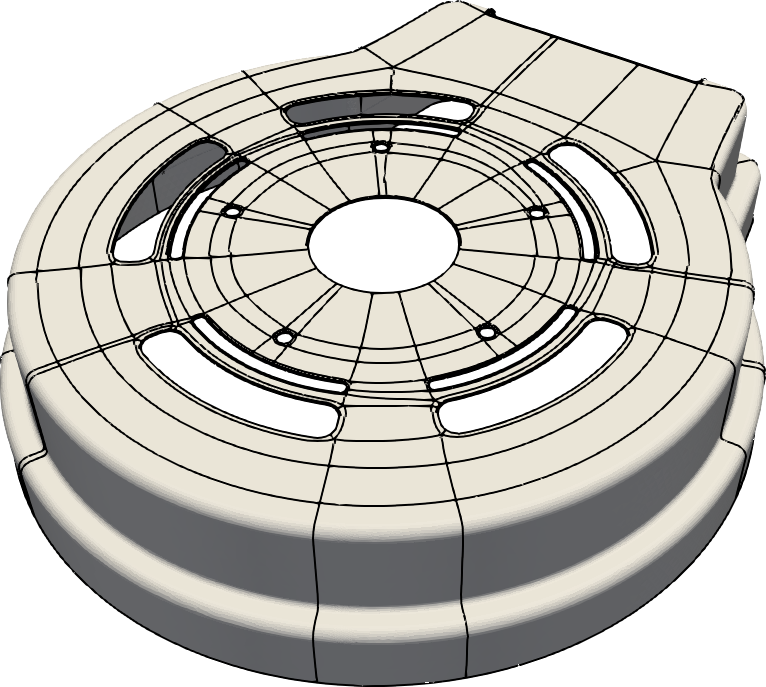}
    \includegraphics[width=0.33\linewidth]{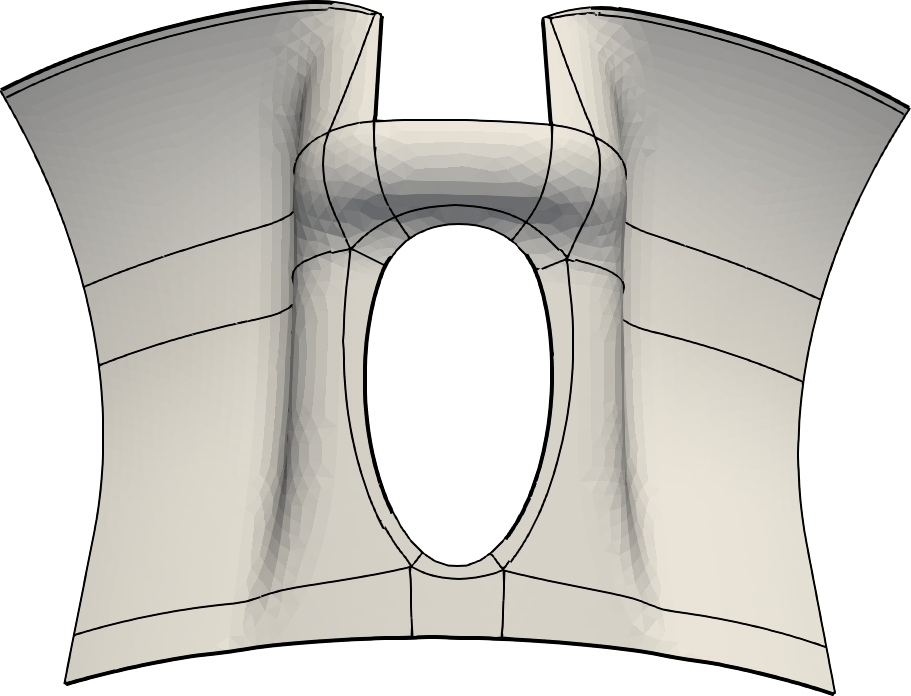}
    \caption{More examples showing the models \emph{sprayer}, \emph{faceplate}, and \emph{part29}. See the caption for \Cref{fig:examples}.}
    \label{fig:more-examples}
\end{figure*}
\begin{figure*}[h!]
  \includegraphics[width=0.33\linewidth]{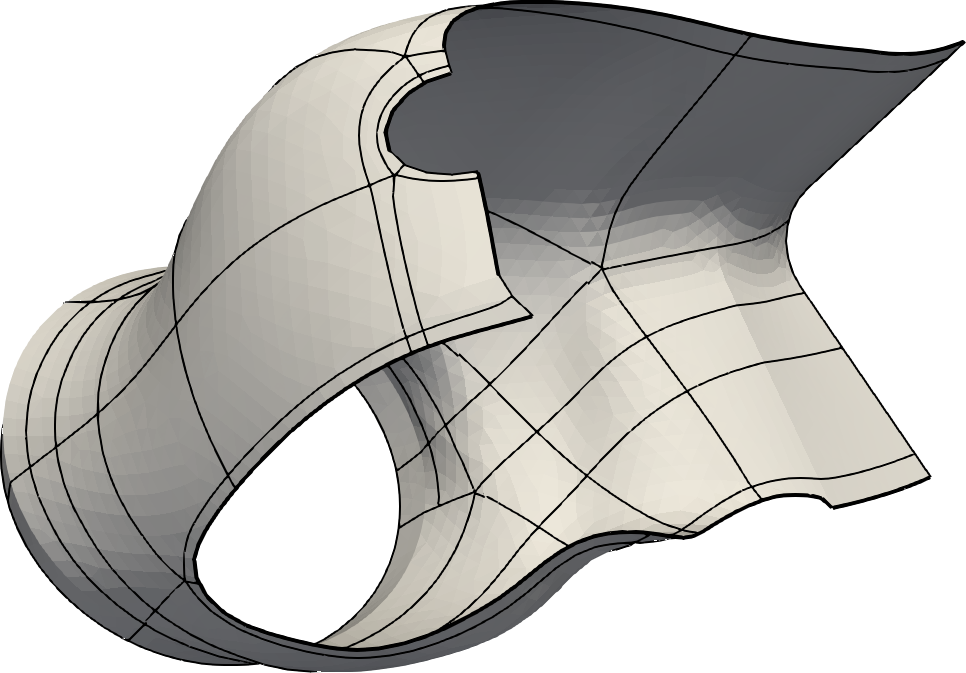}
  \includegraphics[width=0.33\linewidth]{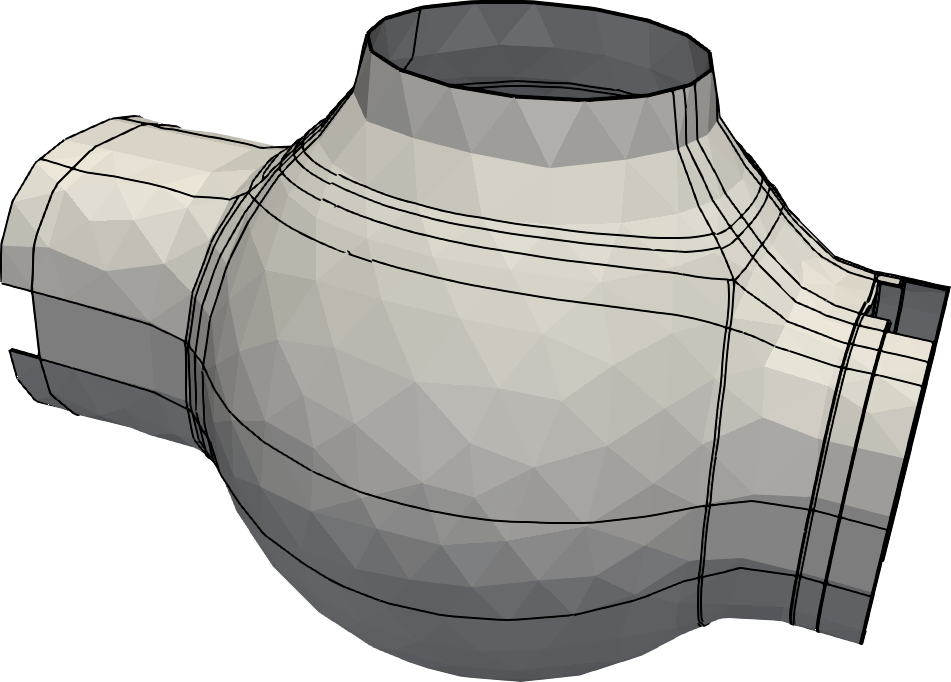}
  \includegraphics[width=0.33\linewidth]{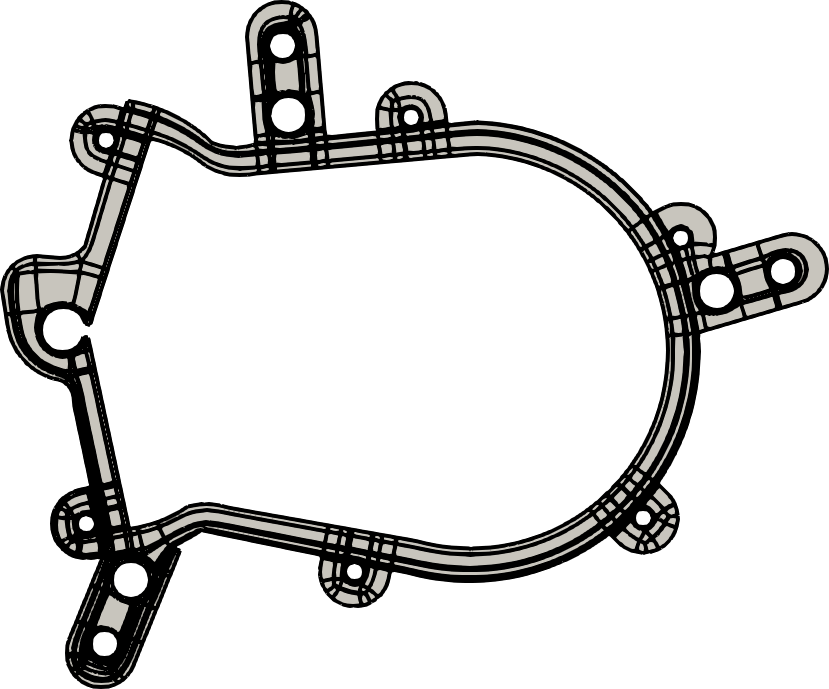}\\
  \includegraphics[width=0.33\linewidth]{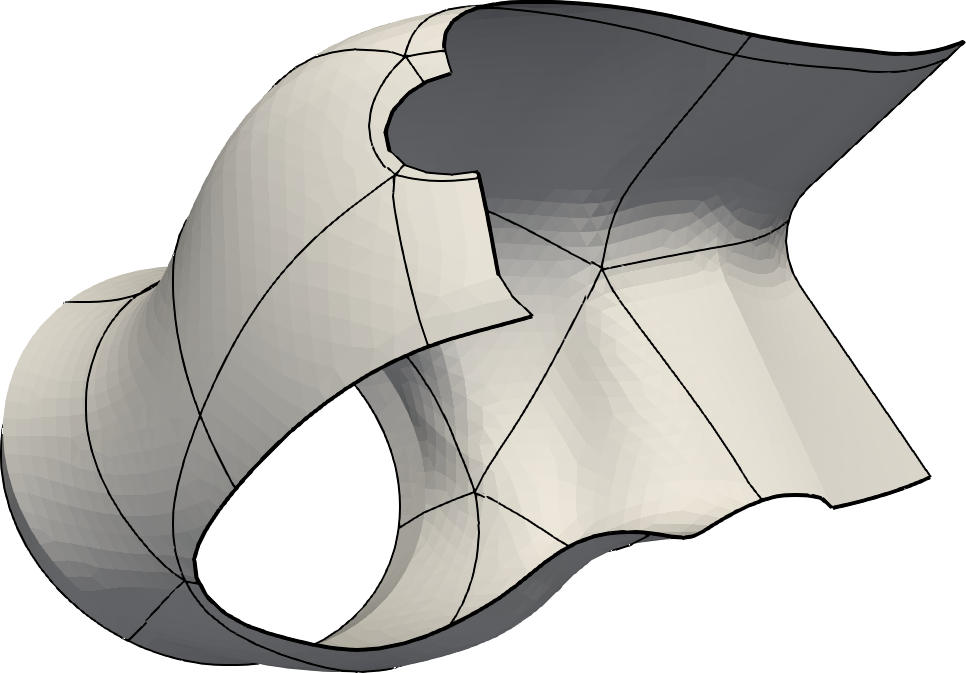}
  \includegraphics[width=0.33\linewidth]{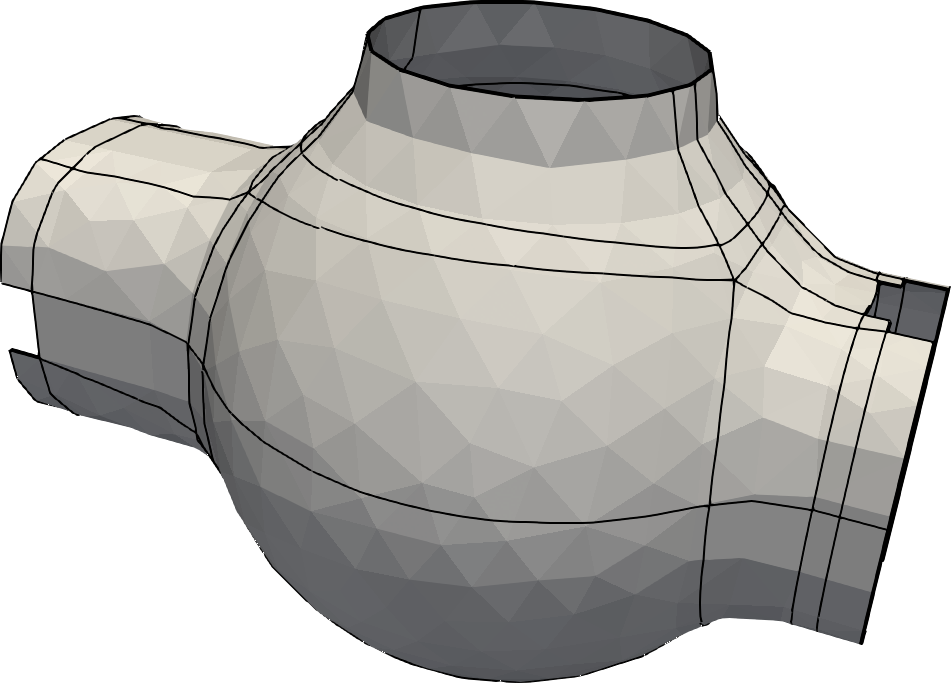}
  \includegraphics[width=0.33\linewidth]{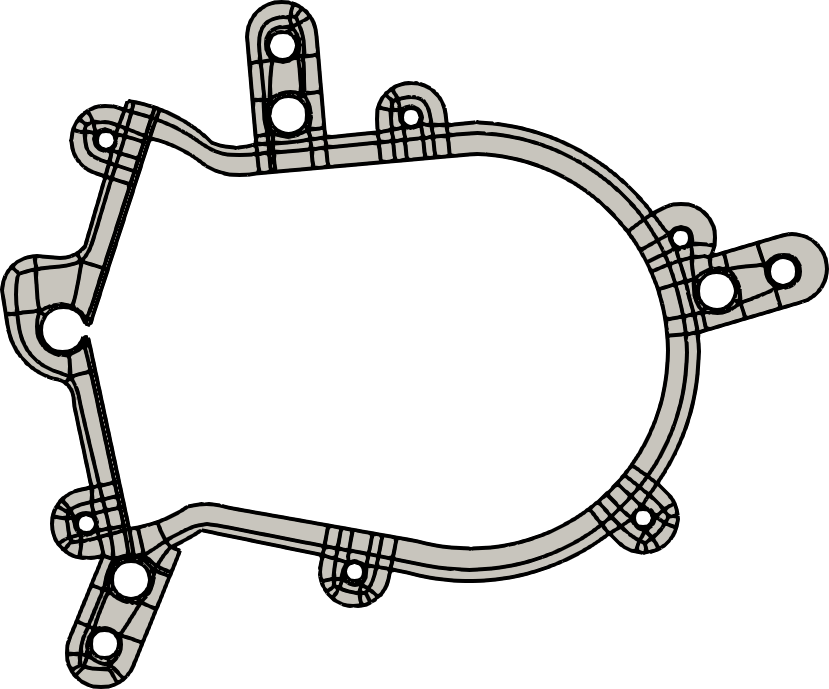}
  \caption{More examples showing the models \emph{test1}, \emph{engine2}, and \emph{pump}. See the caption for \Cref{fig:examples}.}
  \label{fig:even-more-examples}
\end{figure*}

Overall, we observe that our algorithm performs well both in terms of efficiency and results. The timings reported for our cross field design method are comparable to those for the fastest cross field design methods \cite{knoppel_globally_2013,jakob_instant_2015}. The timings for partition simplification reported in this paper are approximately an order of magnitude faster than those reported in \cite{tarini_simple_2011} and \cite{razafindrazaka_perfect_2015} on similar sized models. Visually, the coarseness of the final quad layouts appear to be comparable across all three methods; however, a better comparison using the same models with each method is needed.

Out of the database of 100 models that we tested, eight models still had T-junctions after the simplification process. On four of those models, the T-junctions could be removed by simply continuing to trace the streamlines until they reached the boundary; see \Cref{fig:removable-ts} top. On the other four, there was at least one T-junction where the corresponding streamline approached a limit cycle. In each case that we observed, all T-junctions could have been removed from the initial partition by collapsing the chords in a different order (\Cref{fig:removable-ts} bottom), which suggests that perhaps a better collapse order would prioritize or even require collapsing chords that end in T-junctions.
\begin{figure}[h!]
  \begin{center}
    \includegraphics[width=\linewidth]{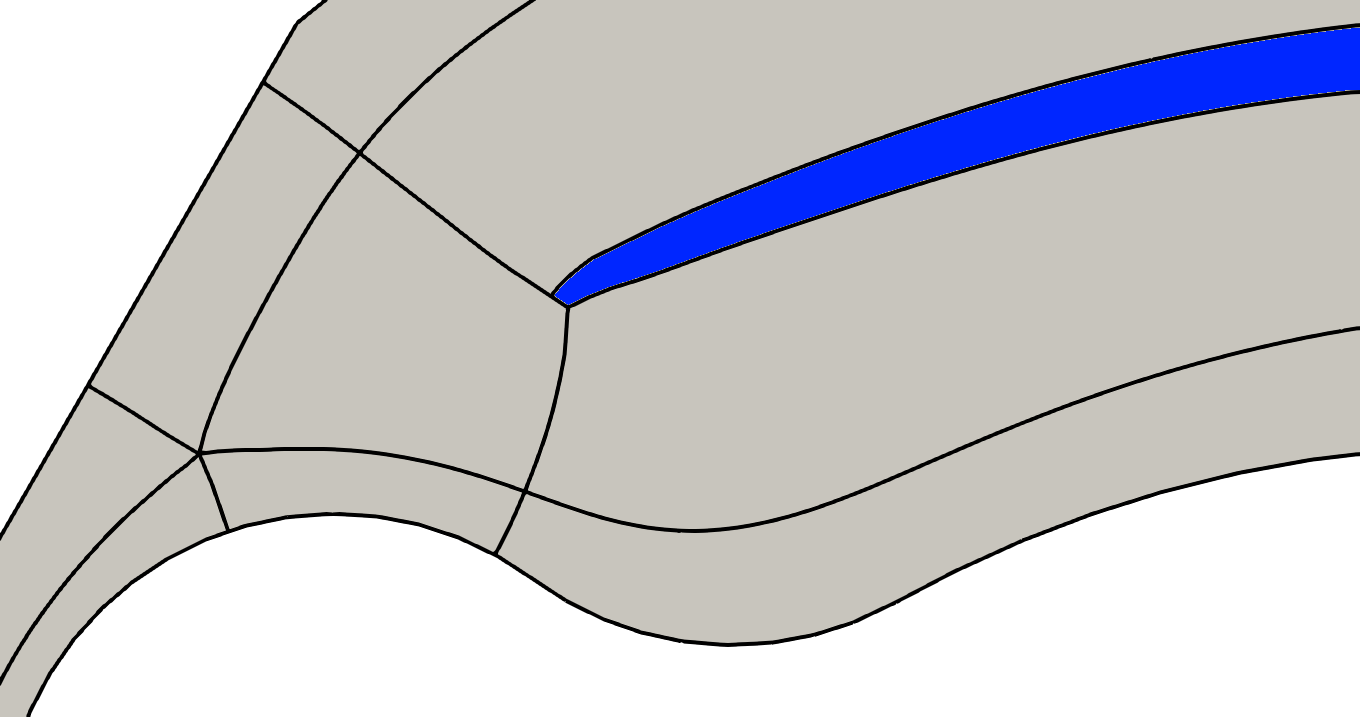}\\
    \includegraphics[width=.8\linewidth]{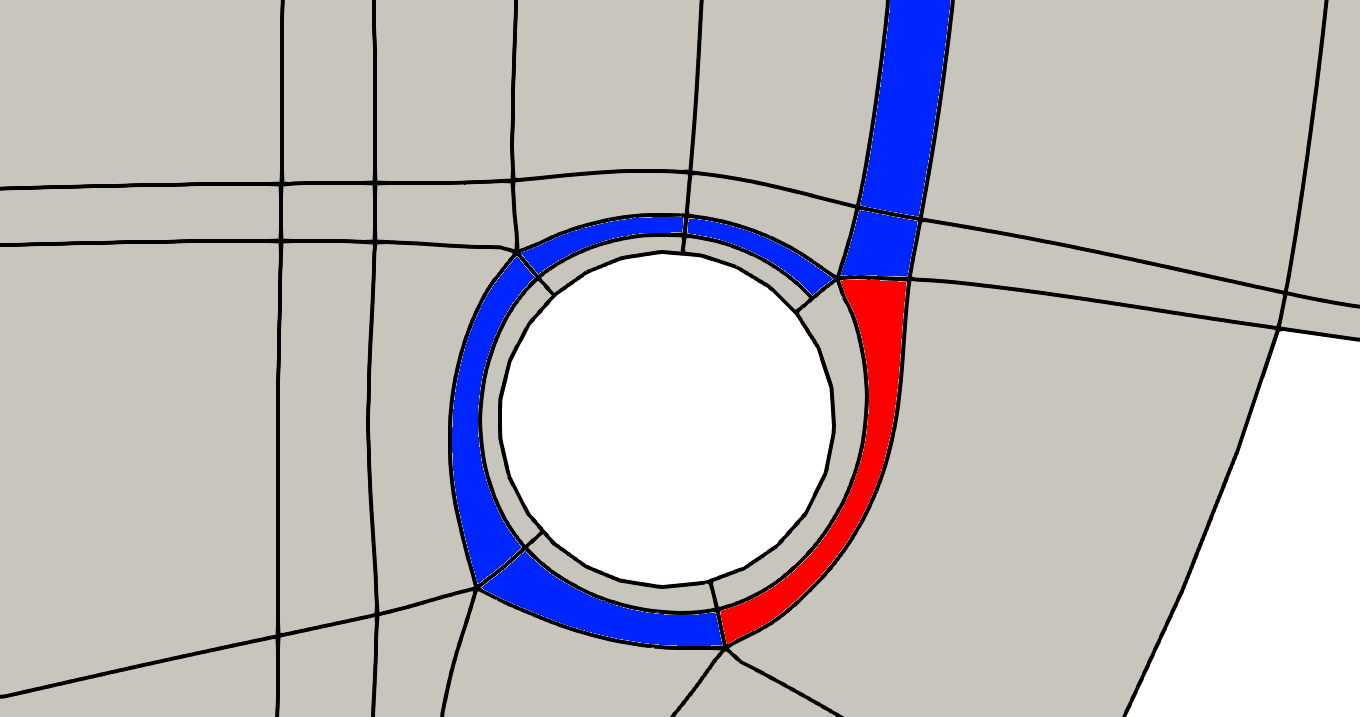}
  \end{center}
  \caption{T-junctions not eliminated by our method. {\bf Top:} A T-junction that can be removed simply by tracing the streamline until it reaches the boundary. The T-junction occurred here because the separatrix intersected another separatrix while passing through a singular triangle. Collapsing the chord (highlighted in blue, full chord not shown) would remove the T-junction, but the energy condition for collapse was not met. {\bf Bottom:} A T-junction resulting from a singularity approaching a limit cycle. Zip and non-zip patches of the chord adjacent to the T-junction are marked in red and blue respectively. The colored chord is not collapsible because further up the chord a singularity is opposite a transverse rung (not shown). In the initial partition, this T-junction could have been removed by a chord collapse without this obstruction.}
  \label{fig:removable-ts}
\end{figure}

\section{Discussion} \label{sec:discussion}

In this paper, we have further developed three parts of the pipeline described in \cref{alg:ps-overview}: an efficient method for high-quality cross field design, a method to accurately compute the trajectory of streamlines in the neighborhood of a singularity that avoids tangential crossings, and a robust partition simplification algorithm. We implemented a pipeline including these improvements, and executed our code on a database of 100 CAD surfaces. In all cases, the number of partition components and T-junctions was significantly reduced, and in 92 out of 100 cases we were able to generate a coarse quad layout with no T-junctions.

The diffusion generated method is well suited for cross field design on CAD surfaces because it results in smooth boundary aligned cross fields with good singularity placement near the boundaries. It is also comparable in speed to the fastest cross field design methods; however, a more in-depth analysis is needed to fully compare the results.

Our novel method for tracing the trajectories of streamlines near singular points is simple and allows for accurate computation while avoiding tangential crossings. Our implementation away from singularities is, however, limited by our choice of a node-based cross field representation, as we are not aware of any methods for such a representation that guarantee that streamlines in regular triangles will not cross tangentially. Our method could be improved by extending it to work in conjunction with streamline tracing methods such as Ray and Sokolov \cite{ray_robust_2014} and Myles et al.\ \cite{myles_robust_2014}.

\begin{figure}[h]
  \begin{center}
    \includegraphics[width=\linewidth]{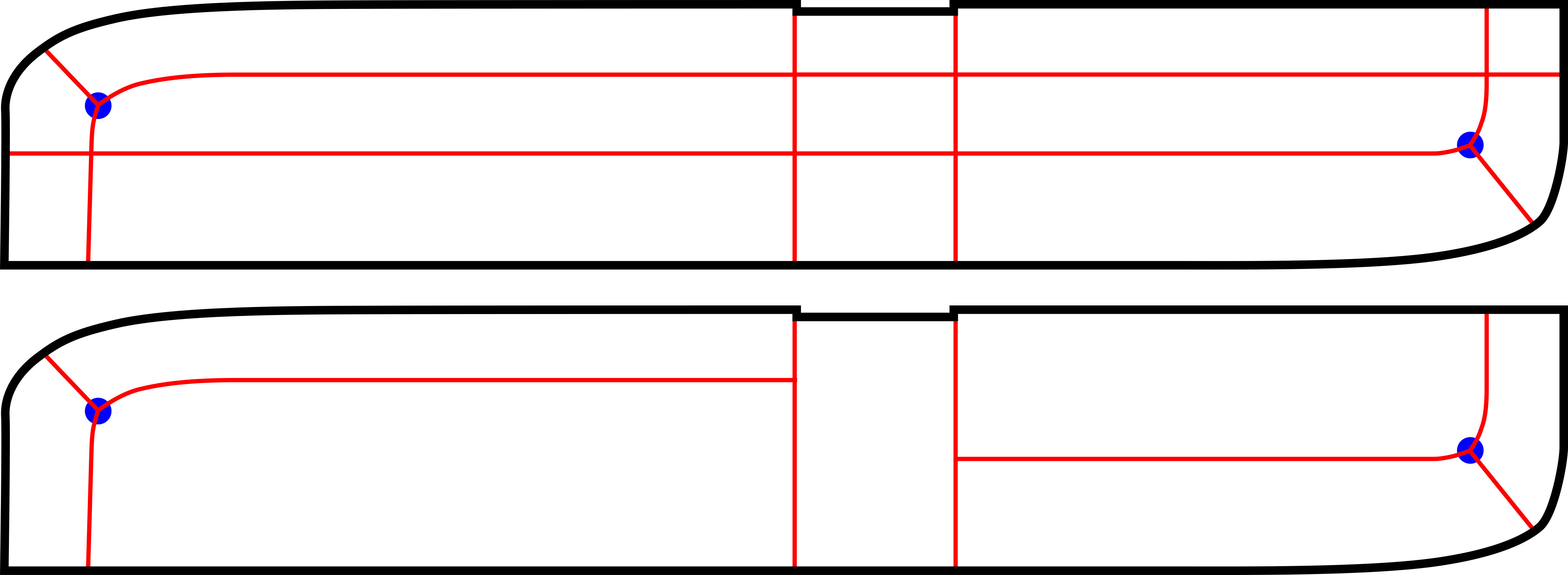}
  \end{center}
    \caption{A simple geometry where a motorcycle graph can not be simplified via a chord collapse. {\bf Top:} The partition traced out according to the conditions in section \ref{sec:tangential-crossings}. The chord running lengthwise through the center can be collapsed. {\bf Bottom:} The motorcycle graph for the same geometry. The collapsible chord from above is never formed because separatrices are cut off prior to it forming.}
    \label{fig:motorcycle-comparison}
\end{figure}

Our partition simplification algorithm is based on a simple chord collapse operation and is guaranteed to strictly decrease the number of partition components at each step as well as monotonically decrease the number of T-junctions. While our collapse operation is similar to the operation for collapsing zero-chains in Myles et al.\ \cite{myles_robust_2014}, the context in which the operation is applied is different. Perhaps the most important difference is that the T-mesh in \cite{myles_robust_2014} is a motorcycle graph, where separatrices are cut off after their first crossing with another separatrix, while in our method we trace out separatrices further, and cut them off according to conditions which allow for the collapse to have the effect of connecting two singularities together. This is illustrated in \Cref{fig:motorcycle-comparison}. Our method also collapses chords much more aggressively than \cite{myles_robust_2014}, as the primary goal of our algorithm is to generate a coarse quad layout (without T-junctions) whereas the reason for collapsing in \cite{myles_robust_2014} is to remove zero-chains which would result in a degenerate parameterization. The collapsing order in our algorithm prioritizes collapsing thin regions first whereas in \cite{myles_robust_2014} there is no discussion of order. There are also some subtle differences between the definitions of the operations themselves. For example, in \cite{myles_robust_2014}, the definition of collapsible zero-chains depends on the assignment of parametric lengths to edges of the input T-layout, whereas our definition of a chord is strictly geometric. Further, the notion of a \emph{patch} in our operation allows for zip operations spanning multiple quads where a zip like operation in \cite{myles_robust_2014} always occurs across a single quad. The cumulative effect of these differences is that we are able to demonstrate that the iterative collapse of chords can be an effective tool for generating coarse quad layouts, many times eliminating all T-junctions, whereas in \cite{myles_robust_2014}, the collapse operation is used in a limited scope, with the purpose of ensuring global consistency of their parametric length assignment.

\begin{figure*}[h]
  \begin{center}
    \includegraphics[width=0.3\linewidth]{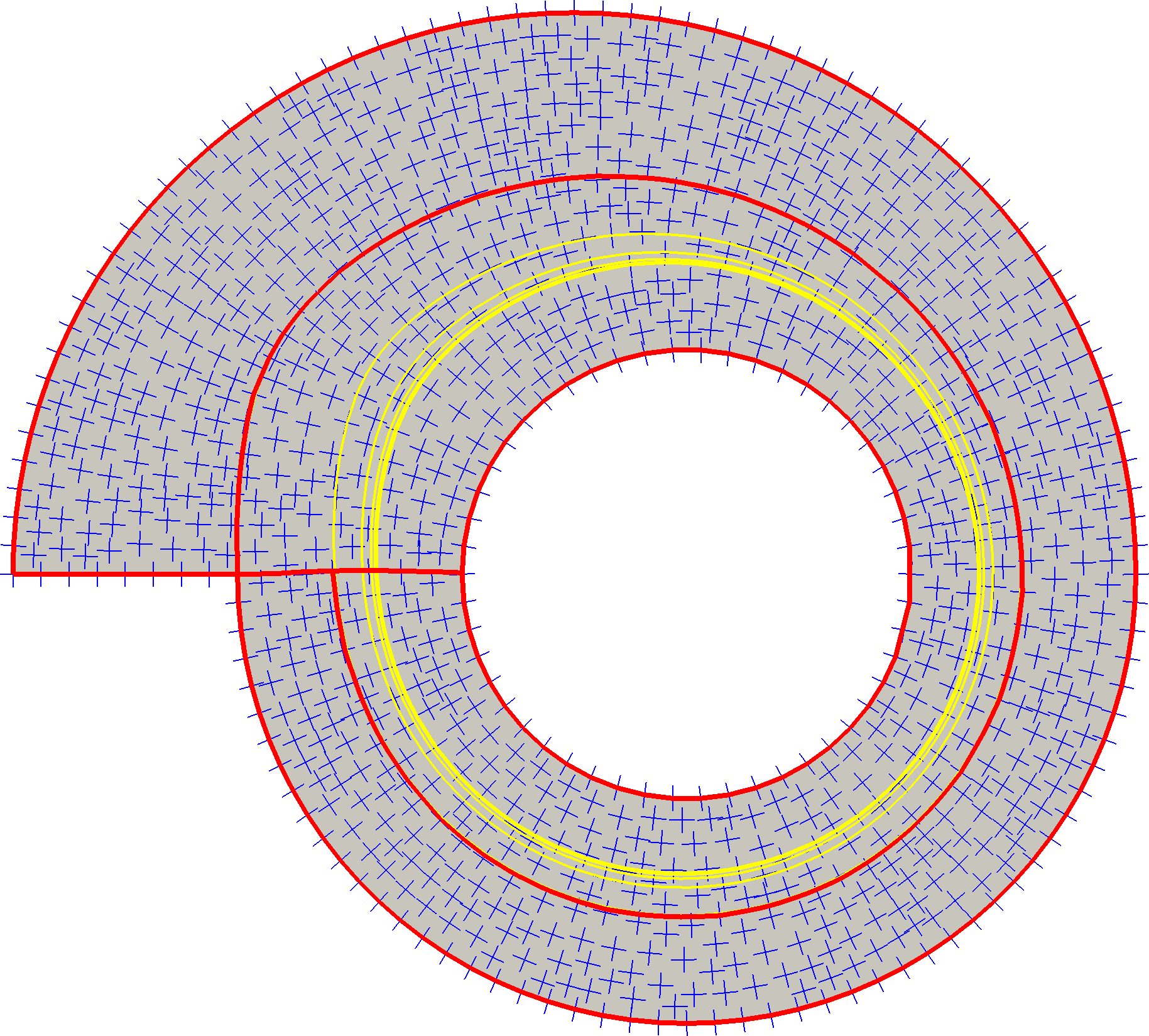}\hspace{5em}
    \includegraphics[width=0.3\linewidth]{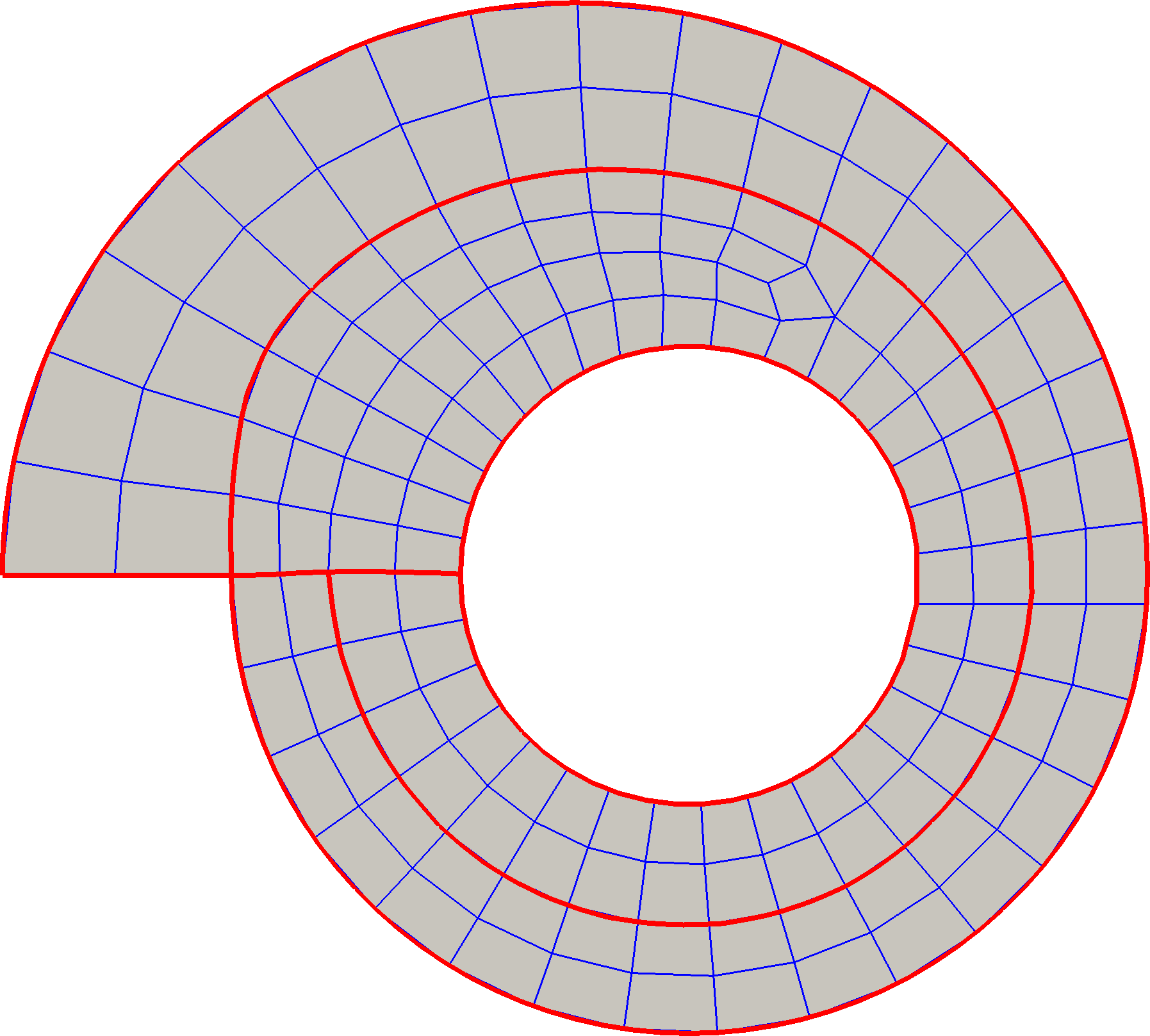}
  \end{center}
    \caption{A surface where the cross field contains a limit cycle, resulting in a T-junction that cannot be removed without the introduction of additional singularities. {\bf Left:} A boundary aligned cross field is shown in blue. The T-layout obtained by tracing out streamlines is shown in red. The yellow streamline begins at the geometric corner, follows the red curve of the partition, and continues on to converge to a limit cycle. {\bf Right:} Because of the limit cycle, a pair of 3- and 5-valent nodes is needed to mesh the region adjacent to the T-junction.}
    \label{fig:nautilus}
\end{figure*}
Our collapse method works directly on streamlines, so it does not require pre-meshing like methods such as Tarini et al.\ \cite{tarini_simple_2011} and Bommes et al.\ \cite{bommes_global_2011}, or prior computation of a seamless parameterization like Razafindrazaka et al.\ \cite{razafindrazaka_perfect_2015}, Campen et al.\ \cite{campen_quantized_2015}, or Zhang et al.\ \cite{zhang_automatic_2016}. Since our method requires tracing streamlines accurately enough to avoid tangential crossings, the number of streamlines to be traced in our method scales linearly with the number of singularities as opposed to the method of Pietroni et al.\ \cite{pietroni_tracing_2016}, who allow computation of streamlines that cross tangentially at the cost that the number of streamlines to be drawn increases with combinatorial complexity as the number of singularities increases.

Each of \cite{razafindrazaka_perfect_2015,campen_quantized_2015,zhang_automatic_2016,pietroni_tracing_2016} formulate their problems as binary optimization problems, however, in order to achieve reasonable run times, they each significantly prune the search space by employing some clever method which leverages the structure of the T-layout generated by the separatrices of the underlying cross field or parameterization. While it appears that our method is at a disadvantage because it takes a greedy approach, it is not clear to what degree the final results of each method are driven by heuristics or optimization. In addition, its not clear how well indirect objectives such as the total length of separatrices weighted by how far they drift from the underlying field, reflect objectives such as minimizing the number of quad components, or maximizing the minimal width of a chord which can be pursued directly via our method. A specific application with objectively stated goals and a common set of models is needed for a clear comparison between the quality of the quad layouts generated by all of the methods mentioned above.

% In contrast to our method, the method of Razafindrazaka et al.\ \cite{razafindrazaka_perfect_2015} only considers port matchings that result in an all quad mesh, and in the cases where our method failed to remove T-junctions, would have succeeded. The main downside to their approach is that there is no way to guarantee that any solution exits to the perfect matching problem. Indeed, on the surface shown in \Cref{fig:nautilus}, there is no all quad layout for the set of singularities defined by the cross field. Where their method would fail, ours still produces a quad layout with a T-junction, which can still easily be meshed with a pattern-based technique.

The main downside of our partition simplification method is that it does not completely eliminate T-junctions from the layout. It is not always possible to produce a quad layout with no T-junctions for a given set of singularities, so an important direction for future work is to develop a method that uses strategic insertion or removal of singularities in order to guarantee that the elimination all T-junctions from a given input T-layout is possible. Such a result would be beneficial for applications in meshing for FEM, surface reconstruction into CAD from image data or geometries generated via topology optimization, and in constructing spline bases for isogeometric analysis.

\section*{Acknowledgments}
B. Osting is partially supported by NSF DMS 16-19755 and 17-52202. R. Viertel and M. Staten are supported by Sandia National Laboratories. Sandia National Laboratories is a multi-mission laboratory managed and operated by National Technology \& Engineering Solutions of Sandia, LLC., a wholly owned subsidiary of Honeywell International, Inc., for the U.S. Department of Energy's National Nuclear Security Administration under contract DE-NA0003525. SAND2019-5668 C.

\bibliography{partition-simplification}

\begin{thebibliography}{10}
\newcommand{\enquote}[1]{``#1''}
\expandafter\ifx\csname url\endcsname\relax
  \def\url#1{\texttt{#1}}\fi
\expandafter\ifx\csname urlprefix\endcsname\relax\def\urlprefix{URL }\fi

\bibitem{lindquist_comparison_1989}
Lindquist D.R., Gilest M.B.
\newblock \enquote{{A} {Comparison} {of} {Numerical} {Schemes} {on}
  {Triangular} {and} {Quadrilateral} {Meshes}.}
\newblock \emph{11th {International} {Conference} on {Numerical} {Methods} in
  {Fluid} {Dynamics}}, pp. 369--373. Springer Berlin Heidelberg, 1989

\bibitem{sandia_trilinos_2017}
{Sandia National Laboratories}.
\newblock \enquote{Trilinos 12.12.1.}, 2017

\bibitem{kowalski_pde_2013}
Kowalski N., Ledoux F., Frey P.
\newblock \enquote{{A} {PDE} {Based} {Approach} to {Multidomain} {Partitioning}
  and {Quadrilateral} {Meshing}.}
\newblock \emph{Proceedings of the 21st International Meshing Roundtable}, pp.
  137--154. Springer Berlin Heidelberg, 2013

\bibitem{campen_partitioning_2017}
Campen M.
\newblock \enquote{{Partitioning} {Surfaces} {Into} {Quadrilateral} {Patches}:
  {A} {Survey}.}
\newblock \emph{Comput. Graph. Forum}, vol.~36, no.~8, 567--588, 2017

\bibitem{dlr_megacads_2001}
MegaCads.
\newblock \enquote{MegaCads 2.5.}, 2001

\bibitem{tam_2d_1991}
Tam T.K.H., Armstrong C.G.
\newblock \enquote{{2D} {Finite} {Element} {Mesh} {Generation} {by} {Medial}
  {Axis} {Subdivision}.}
\newblock \emph{Adv. Eng. Software}, vol.~13, no.~5, 313--324, Sep. 1991

\bibitem{gould_automated_2012}
Gould J., Martineau D., Fairey R.
\newblock \enquote{Automated {Two}-{Dimensional} {Multi}-{Block} {Meshing}
  {Using} the {Medial} {Object}.}
\newblock \emph{Proceedings of the 20th {International} {Meshing}
  {Roundtable}}, pp. 437--452. Springer Berlin Heidelberg, 2012

\bibitem{dong_spectral_2006}
Dong S., Bremer P.T., Garland M., Pascucci V., Hart J.C.
\newblock \enquote{Spectral Surface Quadrangulation.}
\newblock \emph{ACM Trans. Graph.}, vol.~25, no.~3, 1057--1066, Jul. 2006

\bibitem{lei_generalized_2017}
Lei N., Zheng X., Si H., Luo Z., Gu X.
\newblock \enquote{Generalized Regular Quadrilateral Mesh Generation based on
  Surface Foliation.}
\newblock \emph{Procedia Engineering}, vol. 203, 336 -- 348, 2017.
\newblock 26th International Meshing Roundtable, IMR26, 18-21 September 2017,
  Barcelona, Spain

\bibitem{campen_dual_2012}
Campen M., Bommes D., Kobbelt L.
\newblock \enquote{Dual Loops Meshing: Quality Quad Layouts on Manifolds.}
\newblock \emph{{ACM} Trans. Graph.}, vol.~31, no.~4, 1--11, 2012

\bibitem{tarini_simple_2011}
Tarini M., Puppo E., Panozzo D., Pietroni N., Cignoni P.
\newblock \enquote{{Simple} {Quad} {Domains} {for} {Field} {Aligned} {Mesh}
  {Parametrization}.}
\newblock \emph{ACM Trans. Graph.}, vol.~30, no.~6, 142:1--142:12, Dec. 2011

\bibitem{bommes_global_2011}
Bommes D., Lempfer T., Kobbelt L.
\newblock \enquote{Global {Structure} {Optimization} of {Quadrilateral}
  {Meshes}.}
\newblock \emph{Comput. Graph. Forum}, vol.~30, no.~2, 375--384, Apr. 2011

\bibitem{bommes_integer-grid_2013}
Bommes D., Campen M., Ebke H.C., Alliez P., Kobbelt L.
\newblock \enquote{{Integer}-{Grid} {Maps} for {Reliable} {Quad} {Meshing}.}
\newblock \emph{{ACM} Trans. Graph.}, vol.~32, no.~4, 98:1--98:12, 2013

\bibitem{razafindrazaka_perfect_2015}
Razafindrazaka F.H., Reitebuch U., Polthier K.
\newblock \enquote{Perfect {Matching} {Quad} {Layouts} for {Manifold}
  {Meshes}.}
\newblock \emph{Comput. Graph. Forum}, vol.~34, no.~5, 219--228, Aug. 2015

\bibitem{fogg_automatic_2015}
Fogg H.J., Armstrong C.G., Robinson T.T.
\newblock \enquote{{Automatic} {Generation} of {Multiblock} {Decompositions} of
  {Surfaces}.}
\newblock \emph{Int. J. Numer. Meth. Engng.}, vol. 101, no.~13, 965--991, 2015

\bibitem{campen_quantized_2015}
Campen M., Bommes D., Kobbelt L.
\newblock \enquote{Quantized {Global} {Parametrization}.}
\newblock \emph{ACM Trans. Graph.}, vol.~34, no.~6, 192:1--192:12, Oct. 2015

\bibitem{zhang_automatic_2016}
Zhang S., Zhang H., Yong J.H.
\newblock \enquote{{Automatic} {Quad} {Patch} {Layout} {Extraction} for
  {Quadrilateral} {Meshes}.}
\newblock \emph{Comput. Aided Des. Appl.}, vol.~13, no.~3, 409--416, 2016

\bibitem{pietroni_tracing_2016}
Pietroni N., Puppo E., Marcias G., Roberto R., Cignoni P.
\newblock \enquote{Tracing Field-Coherent Quad Layouts.}
\newblock \emph{Comput. Graph. Forum}, vol.~35, no.~7, 485--496, Oct. 2016

\bibitem{viertel_approach_2019}
Viertel R., Osting B.
\newblock \enquote{An {Approach} to {Quad} {Meshing} {Based} on {Harmonic}
  {Cross}-{Valued} {Maps} and the {Ginzburg}--{Landau} {Theory}.}
\newblock \emph{{SIAM} J. Sci. Comput.}, vol.~41, no.~1, A452--A479, Jan. 2019

\bibitem{hertzmann_illustrating_2000}
Hertzmann A., Zorin D.
\newblock \enquote{{Illustrating} {Smooth} {Surfaces}.}
\newblock \emph{Proceedings of the 27th Annual Conference on Computer Graphics
  and Interactive Techniques}, pp. 517--526. {ACM} Press/Addison-Wesley
  Publishing Co., 2000

\bibitem{ray_robust_2014}
Ray N., Sokolov D.
\newblock \enquote{{Robust} {Polylines} {Tracing} {for} {N}-{Symmetry}
  {Direction} {Field} {on} {Triangulated} {Surfaces}.}
\newblock \emph{{ACM} Trans. Graph.}, vol.~33, no.~3, 30:1--30:11, 2014

\bibitem{myles_robust_2014}
Myles A., Pietroni N., Zorin D.
\newblock \enquote{{Robust} {Field}-{Aligned} {Global} {Parametrization}.}
\newblock \emph{{ACM} Trans. Graph.}, vol.~33, no.~4, 135:1--135:14, 2014

\bibitem{jadhav_consistent_2012}
Jadhav S., Bhatia H., Bremer P.T., Levine J.A., Nonato L.G., Pascucci V.
\newblock \enquote{Consistent Approximation of Local Flow Behavior for 2D
  Vector Fields Using Edge Maps.}
\newblock \emph{Topological Methods in Data Analysis and Visualization {II}},
  Mathematics and Visualization, pp. 141--159. Springer Berlin Heidelberg, 2012

\bibitem{eppstein_motorcycle_2008}
Eppstein D., Goodrich M.T., Kim E., Tamstorf R.
\newblock \enquote{{Motorcycle} {Graphs}: {Canonical} {Quad} {Mesh}
  {Partitioning}.}
\newblock \emph{Proceedings of the Symposium on Geometry Processing}, pp.
  1477--1486. Eurographics Association, 2008

\bibitem{razafindrazaka_optimal_2017}
Razafindrazaka F.H., Polthier K.
\newblock \enquote{{Optimal} {Base} {Complexes} {for} {Quadrilateral}
  {Meshes}.}
\newblock \emph{Comput. Aided Geom. D.}, vol. 52-53, 63--74, Mar. 2017

\bibitem{palacios_rotational_2007}
Palacios J., Zhang E.
\newblock \enquote{{Rotational} {Symmetry} {Field} {Design} on {Surfaces}.}
\newblock \emph{ACM Trans. Graph.}, vol.~26, no.~3, 55:1--55:10, Jul. 2007

\bibitem{ray_n-symmetry_2008}
Ray N., Vallet B., Li W.C., L\'evy B.
\newblock \enquote{{N}-{Symmetry} {Direction} {Field} {Design}.}
\newblock \emph{{ACM} Trans. Graph.}, vol.~27, no.~2, 10:1--10:13, 2008

\bibitem{ray_geometry-aware_2009}
Ray N., Vallet B., Alonso L., L\'evy B.
\newblock \enquote{{Geometry}-{Aware} {Direction} {Field} {Processing}.}
\newblock \emph{{ACM} Trans. Graph.}, vol.~29, no.~1, 1:1--1:11, 2009

\bibitem{bommes_mixed-integer_2009}
Bommes D., Zimmer H., Kobbelt L.
\newblock \enquote{{Mixed}-{Integer} {Quadrangulation}.}
\newblock \emph{ACM Trans. Graph.}, vol.~28, no.~3, 77:1--77:10, Jul. 2009

\bibitem{crane_trivial_2010}
Crane K., Desbrun M., Schr\"oder P.
\newblock \enquote{{Trivial} {Connections} on {Discrete} {Surfaces}.}
\newblock \emph{Comput. Graph. Forum}, vol.~29, no.~5, 1525--1533, 2010

\bibitem{knoppel_globally_2013}
Kn{\"{o}}ppel F., Crane K., Pinkall U., Schr{\"o}der P.
\newblock \enquote{{Globally} {Optimal} {Direction} {Fields}.}
\newblock \emph{{ACM} Trans. Graph.}, vol.~32, no.~4, 59:1--59:10, 2013

\bibitem{panozzo_frame_2014}
Panozzo D., Puppo E., Tarini M., Sorkine-Hornung O.
\newblock \enquote{Frame Fields: Anisotropic and Non-orthogonal Cross Fields.}
\newblock \emph{{ACM} Trans. Graph.}, vol.~33, no.~4, 1--11, 2014

\bibitem{jakob_instant_2015}
Jakob W., Tarini M., Panozzo D., Sorkine-Hornung O.
\newblock \enquote{{Instant} {Field}-{Aligned} {Meshes}.}
\newblock \emph{{ACM} Trans. Graph.}, vol.~34, no.~6, 189:1--189:15, 2015

\bibitem{vaxman_directional_2016}
Vaxman A., Campen M., Diamanti O., Panozzo D., Bommes D., Hildebrandt K.,
  Ben-Chen M.
\newblock \enquote{{Directional} {Field} {Synthesis}, {Design}, {and}
  {Processing}.}
\newblock \emph{Comput. Graph. Forum}, vol.~35, no.~2, 545--572, 2016

\bibitem{huang_extrinsically_2016}
Huang Z., Ju T.
\newblock \enquote{{Extrinsically} {Smooth} {Direction} {Fields}.}
\newblock \emph{Comput. Graph.}, vol.~58, 109--117, 2016

\bibitem{beaufort_computing_2017}
Beaufort P.A., Lambrechts J., Henrotte F., Geuzaine C., Remacle J.F.
\newblock \enquote{Computing two dimensional cross fields - A {PDE} approach
  based on the {G}inzburg-{L}andau theory.}
\newblock \emph{Proceedings of the 26th International Meshing Roundtable}, pp.
  219--231. Elsevier Ltd., 2017

\bibitem{bethuel_ginzburg-landau_1994}
Bethuel F., Brezis H., H\'elein F.
\newblock \emph{{G}inzburg-{L}andau Vortices}, vol.~13 of \emph{Progress in
  Nonlinear Differential Equations and Their Applications}.
\newblock Birkh\"auser Boston, 1994

\bibitem{ruuth_diffusion-generated_2001}
Ruuth S.J., Merriman B., Xin J., Osher S.
\newblock \enquote{{Diffusion}-{Generated} {Motion} {by} {Mean} {Curvature}
  {for} {Filaments}.}
\newblock \emph{J. Nonlinear Sci.}, vol.~11, no.~6, 473, 2001

\bibitem{laux_analysis_2019}
Laux T., Yip N.K.
\newblock \enquote{{Analysis} {of} {Diffusion} {Generated} {Motion} {for}
  {Mean} {Curvature} {Flow} {in} {Codimension} {Two}: {A} {Gradient}-{Flow}
  {Approach}.}
\newblock \emph{Arch. Rational Mech. Anal.}, vol. 232, no.~2, 1113--1163, May
  2019

\bibitem{zhang_vector_2006}
Zhang E., Mischaikow K., Turk G.
\newblock \enquote{{Vector} {Field} {Design} {on} {Surfaces}.}
\newblock \emph{{ACM} Trans. Graph.}, vol.~25, no.~4, 1294--1326, 2006

\bibitem{schwartz_generalization_1963}
Schwartz A.J.
\newblock \enquote{{A} {Generalization} {of} {a} {Poincar\'e}-{Bendixson}
  {Theorem} {to} {Closed} {Two}-{Dimensional} {Manifolds}.}
\newblock \emph{Amer. J. Math.}, vol.~85, no.~3, 453--458, 1963

\bibitem{borden_hexahedral_2002}
Borden M.J., Benzley S.E., Shepherd J.F.
\newblock \enquote{{Hexahedral} {Sheet} {Extraction}.}
\newblock \emph{Proceedings of the 11th International Meshing Roundtable}, pp.
  147--152. Springer-Verlag, 2002

\bibitem{daniels_quadrilateral_2008}
Daniels J., Silva C.T., Shepherd J., Cohen E.
\newblock \enquote{{Quadrilateral} {Mesh} {Simplification}.}
\newblock \emph{ACM Trans. Graph.}, vol.~27, no.~5, 148:1--148:9, Dec. 2008

\bibitem{sandia_cubit_2017}
{Sandia National Laboratories}.
\newblock \enquote{{The} {CUBIT} {Geometry} {and} {Mesh} {Generation} {Toolkit}
  {15.3}.}, 2017

\end{thebibliography}

\end{document}